\newtheorem{definition}{Definition}
\newtheorem{theorem}[definition]{Theorem}
\newtheorem{lemma}[definition]{Lemma}
\newtheorem{remark}[definition]{Remark}
\newtheorem{corollary}[definition]{Corollary}
\newtheorem{proposition}[definition]{Proposition}
\def\XXint#1#2#3{{\setbox0=\hbox{$#1{#2#3}{\int}$}
		\vcenter{\hbox{$#2#3$}}\kern-.5\wd0}}
\newcommand{\G}{\mathcal{G}}
\newcommand{\I}{\mathcal{I}}
\DeclareMathOperator{\Res}{Res}
\begin{document}

\title[A simple formula for the $x$-$y$ symplectic transformation in topological recursion at all genus]{A simple formula for the $x$-$y$ symplectic transformation in topological recursion}

\author{Alexander Hock}

\address{Mathematical Institute, University of Oxford, Andrew Wiles Building, Woodstock Road,
	OX2 6GG, Oxford, UK \\
	{\itshape E-mail address:} \normalfont  
	\texttt{alexander.hock@maths.ox.ac.uk}}

\begin{abstract}
	Let $W_{g,n}$ be the correlators computed by Topological Recursion for some given spectral curve $(x,y)$ and $W^\vee_{g,n}$ for $(y,x)$, where the role of $x,y$ is inverted. These two sets of correlators $W_{g,n}$ and $W^\vee_{g,n}$ are related by the $x$-$y$ symplectic transformation. Bychkov, Dunin-Barkowski, Kazarian and Shadrin computed a functional relation between two slightly different sets of correlators. Together with Alexandrov, they proved that their functional relation is indeed the $x$-$y$ symplectic transformation in Topological Recursion. This article provides a fairly simple formula directly between $W_{g,n}$ and $W^\vee_{g,n}$ which holds by their theorem for meromorphic $x$ and $y$ with simple and distinct ramification points. Due to the recent connection between free probability and fully simple vs ordinary maps, we conclude a simplified moment-cumulant relation for moments and higher order free cumulants.
\end{abstract}

\maketitle

\markboth{\hfill\textsc\shortauthors}{\textsc{A simple formula for the $x$-$y$ symplectic transformation in Topological Recursion}\hfill}


\section{Introduction}
Invented in 2007 by Chekhov, Eynard and Orantin \cite{Chekhov:2006vd,Eynard:2007kz}, the theory of Topological Recursion (TR) reaches more and more applications in mathematical physics and pure mathematics. TR is a universal recursive procedure to compute from a given initial data $(\Sigma,x,y,B)$, called the \textit{spectral curve}, a family of differential forms $\omega_{g,n}$ labelled by two integers $g,n\geq 0$. The application of TR ranges from enumerative geometry and   random matrix theory over string theory to knot theory. We refer to \cite{Eynard:2014zxa} for a short overview.

We will stick to the following setup: Let $\omega_{g,n}$ be a family of
meromorphic differentials on $n$ products of Riemann surfaces $\Sigma$.
These $\omega_{g,n}$ are labeled by the genus $g$ and the number $n$
of marked points of a compact complex curve. These objects occur as differential forms
on a complex curve $E(x,y)=0$, understood in parametric
representation $x(z)$ and $y(z)$. For simplicity, we will assume that the complex curve is of genus zero.

The initial  data is the \textit{spectral curve} $(\Sigma,x,y,B)$ consisting of ramified coverings
$x,y: \Sigma \to \Sigma_0$ of Riemann surfaces. 
Let $\omega_{0,1}(z)=y(z)dx(z)$ be a differential 1-form and the \emph{Bergman kernel}
$\omega_{0,2}(z_1,z_2)=B(z_1,z_2)=\frac{dz_1\,dz_2}{(z_1-z_2)^2}$. From the initial data,
TR constructs a family of meromorphic differentials $\omega_{g,n+1}(z_1,...,z_n,z)$ with
$2g+n-2\geq 0$ via the following universal formula
(in which we abbreviate $I=\{z_1,...,z_n\}$):´
\begin{align}
\label{TR}
& \omega_{g,n+1}(I,z)
\\
& =\sum_{\alpha_i}
\Res\displaylimits_{q\to \alpha_i}
K_i(z,q)\bigg(
\omega_{g-1,n+2,0}(I, q,\sigma_i(q))
+\hspace*{-1cm} \sum_{\substack{g_1+g_2=g\\ I_1\uplus I_2=I\\
		(g_1,I_1)\neq (0,\emptyset)\neq (g_2,I_2)}}
\hspace*{-1.1cm} \omega_{g_1,|I_1|+1}(I_1,q)
\omega_{g_2,|I_2|+1}(I_2,\sigma_i(q))\!\bigg)\;.
\nonumber
\end{align}
This construction proceeds recursively 
in minus the Euler characteristic $-\chi=2g+n-2$. Further, we need to define:
\begin{itemize}
	\item The sum over the \textit{ramification points} $\alpha_i$ of the ramified
	covering  $x:\Sigma\to \Sigma_0$, defined via $dx(\alpha_i)=0$.
	\item The \textit{local Galois involution} $\sigma_i\neq \mathrm{id}$
	defined via $x(q)=x(\sigma_i(q))$ near $\alpha_i$ with the fixed
	point $\alpha_i$. 
	\item The \textit{recursion kernel} $K_i(z,q)
	=\frac{\frac{1}{2}\int^{q'=q}_{q'=\sigma_i(q)}
		B(z,q')}{\omega_{0,1}(q)-\omega_{0,1}(\sigma_i(q))}$  constructed
	from the initial data. 
\end{itemize}
We also assume that $x$ and $y$ have just simple ramification points, $y$ is regular at the ramification points of $x$ and vice versa, and both have no coinciding ramification points. 
All $\omega_{g,n}$ are symmetric differential forms with poles just located at the ramification points of $x$. 

If we interchange the role of $x$ and $y$, which means the spectral curve is of the form $(\Sigma,y,x,B)$, the differential forms are denoted by $\omega^\vee_{g,n}$. For instance, we have $\omega^\vee_{0,1}(z)=x(z) dy(z)$. These are also symmetric differential forms with poles just located at the ramification points of $y$ for $2g+n-2>0$.

\subsection{Symplectic Transformation}
For any spectral curve $(\Sigma, x,y,B)$, the \textit{free energy} is defined by 
\begin{align}\label{freeenergy}
	\mathcal{F}^g=\frac{1}{2-2g}\sum_{\alpha_i}
	\Res\displaylimits_{q\to \alpha_i}\Phi_{0,1}(x(q)) \omega_{g,1}(q)
\end{align}
for $g>1$, where $\Phi_{0,1}(x(q))=\int_o^{x(q)}\omega_{0,1}(q')$, i.e. $d\Phi_{0,1}(x(z))=\omega_{0,1}(z)$. For $g\in\{0,1\}$, the free energies are a bit subtle (see \cite{Eynard:2007kz}).

It is conjectured that the free energies $\mathcal{F}^g$ are invariant under any symplectic transformation of $(x,y)$, i.e. any transformation which leaves $|dx\wedge dy|$ invariant. Due to this observation, the free energies $\mathcal{F}^g$ are also called \textit{symplectic invariants}.

We list some transformations
\begin{itemize}
	\item $(x,y)\to \big(\frac{ax+b}{cx+d},\frac{(cx+d)^2}{ad-bc}y\big)$, where $\begin{pmatrix}
	a & b \\
	c & d 
	\end{pmatrix}\in PSL_2(\mathbb{C})$
	\item $(x,y)\to \big(x,y+R(x)\big)$, where $R(x)$ is any rational function
	\item $(x,y)\to (y,x)$.
\end{itemize}
The first and the second transformations actually also leave all $\omega_{g,n}$ invariant (not just $\mathcal{F}^g$). To see this, take the recursion kernel $K_i(z,q)$ of \eqref{TR}, the only dependence on $(x,y)$ is in the denominator, which is of the form
\begin{align*}
	\omega_{0,1}(q)-\omega_{0,1}(\sigma(q))=(y(q)-y(\sigma(q))) dx(q).
\end{align*}
This is indeed invariant under the first two transformations, since  for any rational function we have $R(x(q))=R(x(\sigma(q)))$. Note also that the first transformation conserves the ramification points. 

However, the third transformation $(x,y)\to (y,x)$, which we call the $x$-$y$ \textit{symplectic transformation}, changes the $\omega_{g,n}$'s completely, but $\mathcal{F}^g$ is conjectured to be invariant. As defined before, we denote the differential forms after the third transformation $\omega_{g,n}^\vee$ to distinguish them from $\omega_{g,n}$. Therefore, it is of considerable interest to understand the transformation of $\omega_{g,n}$ under the $x$-$y$ symplectic transformation.

There are essentially two main examples were the $x$-$y$ symplectic transformation relates two different interesting families of differential forms related to combinatorial problems:
\begin{itemize}
	\item 2-Matrix model \cite{Chekhov:2006vd,Eynard:2007nq}: the differential forms $\omega_{g,n}$ are related to genus $g$ bicoloured maps with $n$ boundaries in the first colour, whereas $\omega_{g,n}^\vee$  are related to genus $g$ bicoloured maps with $n$ boundaries in the second colour
	\item fully simple vs ordinary maps \cite{Bychkov:2021hfh,Borot:2021eif}: the differential $\omega_{g,n}$  are related to genus $g$ maps with $n$ ordinary boundaries, whereas $\omega_{g,n}^\vee$  are related to genus $g$ maps with $n$ fully simple boundaries.
\end{itemize}
Due to the last observation that the $x$-$y$ symplectic transformation relates ordinary and fully simple maps, there is a tremendous connection to the theory of free probability \cite{Voiculescu1986AdditionOC,Collins2006SecondOF}. In short, ordinary maps correspond to the generating series of higher order moments, whereas fully simple maps to the generating series of higher order free cumulants. This means that the relation between $\omega_{g,n}$ and $\omega_{g,n}^\vee$ corresponds to a moment-cumulant relation in free probability.

\subsection{Main Theorem}
This paper gives an explicit functional relation between the differential forms $\omega_{g,n}$ and $\omega_{g,n}^\vee$ which are related by the $x$-$y$ symplectic transformation for a broad class of spectral curves. The functional relation is much more convenient in terms of $W_{g,n}$ and $W^\vee_{g,n}$ which are related to $\omega_{g,n}$ and $\omega_{g,n}^\vee$ by
\begin{align}\label{cor}
	W_{g,n}(x(z_1),....,x(z_n))dx(z_1)...dx(z_n):=&\omega_{g,n}(z_1,...,z_n)\\\label{cor2}
	W^\vee_{g,n}(y(z_1),....,y(z_n))dy(z_1)...dy(z_n):=&\omega^\vee_{g,n}(z_1,...,z_n).
\end{align}
Recently, Bychkov, Dunin-Barkowski, Kazarian and Shadrin \cite{JEP_2022__9__1121_0,Bychkov:2020yzy,Bychkov:2021hfh,Bychkov:2022wgw} invented technical tools coming from certain operators on bosonic Fock space to compute already a functional relation for some special spectral curves related to weighted double Hurwitz numbers. It was then conjectured in \cite[Conj. 3.13]{Borot:2021thu} that this functional relation holds for all spectral curves. Finally, this is proved in \cite{Alexandrov:2022ydc} by Bychkov et al. together with Alexandrov for all spectral curves with meromorphic $x$ and $y$ and distinct simple ramification points.

However, the first known functional relations were given in terms of
\begin{align}\label{Wx}
	W_{g,n}(x_1,....,x_n)\prod_{i=1}^nx_i\qquad \text{and}\qquad 
	W^\vee_{g,n}(y_1,....,y_n)\prod_{i=1}^ny_i,
\end{align}
rather then just $W_{g,n}$ and $W^\vee_{g,n}$. We recap the functional relation for the $x$-$y$ symplectic transformation in Theorem \ref{Thm:Borot} and simplify it to Theorem \ref{Thm:mainintro}. The first known functional relations are of very complicated form consisting of three formal nested power series. We show that the functional relation for $W_{g,n}$ and $W_{g,n}^\vee$ is much simpler and give the most canonical way to relate the correlators computed by TR and its $x$-$y$ symplectic transformed analog. 

With some lack of notation, we are writing $x=x(z(y))=x(y)$ understood as a formal power series of $x$ in $y$. The functional relation then reads: 
\begin{theorem}\label{Thm:mainintro}
	Let $W_n(x_1,...,x_n):=\sum_{g=0}^\infty \hbar^{2g+n-2}W_{g,n}(x_1,...,x_n)$,
	 $S(u)=\frac{e^{u/2}-e^{-u/2}}{u}$ and for $I=\{i_1,...,i_n\}$
	\begin{align*}
	\hat{c}(u_I,x_I):=&\bigg(\prod_{i\in I}\hbar u_i S(\hbar u_i\partial_{x_i})\bigg)\big( W_{n}(x_I)\bigg)
	\end{align*}
	and for $I=\{j,j\}$ the special case 
	\begin{align*}
	\hat{c}(u_{I},x_I):=(\hbar u_j S(\hbar u_j\partial_{x_j}))(\hbar u_j S(\hbar u_j\partial_{x}))\bigg(W_{2}(x_j,x)-\frac{1}{(x_j-x)^2}\bigg)\bigg\vert_{x=x_j}.
	\end{align*}
	Let further be 
	\begin{align*}
	\hat{O} (x_i(y_i)):=&\sum_{m\geq0} \big(-\partial_{y_i}\big)^m\big(-x'_i(y_i)\big)
	[u_i^m]\frac{\exp\bigg(\hbar u_iS(\hbar u_i\partial_{x_i(y_i)})W_1(x_i(y_i))-y_iu_i\bigg)}{\hbar u_i}.
	\end{align*}
	
	Then, we have the functional relation
	\begin{align}\label{MainThmEq}
	W^{\vee}_{g,n}(y_1,...,y_n) =[\hbar^{2g-2+n}]\sum_{\Gamma\in\G_n}\frac{1}{|\mathrm{Aut}(\Gamma)|}\prod_{i=1}^n\hat{O} (x_i)\prod_{I\in\I(\Gamma)}\hat{c}(u_I,x_I),
	\end{align}
	where the graphs $\mathcal{G}_n$ are defined in Definition \ref{def:graph}.
\end{theorem}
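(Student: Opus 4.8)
The starting point is Theorem~\ref{Thm:Borot}, the known functional relation for the $x$-$y$ symplectic transformation expressed in terms of the quantities \eqref{Wx}, namely $W_{g,n}(x_1,\dots,x_n)\prod x_i$ and $W^\vee_{g,n}(y_1,\dots,y_n)\prod y_i$. The plan is to take that relation verbatim and systematically strip off the $\prod x_i$ and $\prod y_i$ factors, rewriting everything purely in terms of $W_{g,n}$ and $W^\vee_{g,n}$. The key observation is that multiplication by $x_i$ and by $y_i$ can be absorbed into the combinatorial building blocks: the factor $y_i u_i$ in the exponent of $\hat O$ and the explicit $x_i'(y_i)$ prefactor are exactly the residual data produced when one changes the normalization from $W_{g,n}\prod x_i$ to $W_{g,n}$. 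So the first step is a careful bookkeeping lemma: write the old edge-weights, vertex-weights and the "$O$-operator" of Theorem~\ref{Thm:Borot} as operators acting on $W_{g,n}$ rather than on $W_{g,n}\prod x_i$, and verify that each of the three nested power series collapses to one of $\hat c(u_I,x_I)$, $\hat O(x_i)$, or the special diagonal case $\hat c(u_{\{j,j\}},x_{\{j,j\}})$.

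Second, I would handle the operator $S(\hbar u\partial_x)$ and the shift operators $(-\partial_{y})^m(-x'(y))$ as formal-power-series operators and check that they commute/compose in the way needed for the graph sum to reorganize. Concretely, the exponential generating nature of $S(u)=\frac{e^{u/2}-e^{-u/2}}{u}$ means that conjugating $W_n$ by $\prod_i \hbar u_i S(\hbar u_i\partial_{x_i})$ is precisely the "discrete Laplace/shift" that appears in the Bychkov--Dunin-Barkowski--Kazarian--Shadrin formula; I would match coefficients of $\prod u_i^{m_i}$ on both sides. The $\hat O(x_i(y_i))$ operator encodes the change of variables $x\leftrightarrow y$ at genus zero via $\omega_{0,1}$, i.e.\ $W_1(x(y)) = $ the series whose derivative implements $y\,dx$; extracting $[u_i^m]$ and applying $(-\partial_{y_i})^m(-x_i'(y_i))$ is the Lagrange-inversion avatar of inverting $x=x(y)$, and I would invoke Theorem~\ref{Thm:Borot} together with the Borot--Shadrin-type identity $d\Phi_{0,1}(x(z))=\omega_{0,1}(z)$ to identify it.

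Third, the graph-combinatorial step: the graphs in $\G_n$ (Definition~\ref{def:graph}) must be shown to be in weight-preserving bijection with whatever index set Theorem~\ref{Thm:Borot} sums over, with the $\frac{1}{|\mathrm{Aut}(\Gamma)|}$ symmetry factors matching. Here the only genuine subtlety is the diagonal/loop contribution: an edge $I=\{j,j\}$ (a loop at a single vertex) carries the weight $\hat c(u_{\{j,j\}},x_{\{j,j\}})$ with the subtracted double pole $\frac{1}{(x_j-x)^2}$ and the restriction $x=x_j$, which is the regularized $\omega_{0,2}$ contribution; I would check separately that this matches the $W_{0,2}-\frac{1}{(x_1-x_2)^2}$ term appearing in the original formula after removing the $x_1 x_2$ normalization.

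\emph{Main obstacle.} The hard part will not be any single manipulation but the \emph{simultaneous} bookkeeping: showing that the three nested formal power series of Theorem~\ref{Thm:Borot} factorize exactly into the product over vertices of $\hat O$ and the product over internal edges of $\hat c$, with no leftover cross-terms, after the $\prod x_i$, $\prod y_i$ factors are removed. In particular one must verify that the operator $\hat O(x_i)$, which contains $W_1$ (all genera) inside an exponential, correctly resums the "dilaton-type" insertions that in the original formula were distributed across several series; and that removing $\prod_i x_i$ does not break the symmetry of the resulting expression in $y_1,\dots,y_n$. Tracking the powers of $\hbar$ and confirming that $[\hbar^{2g-2+n}]$ on the right reproduces exactly $W^\vee_{g,n}$ (no contributions leaking to other $(g',n')$) is where the argument must be most careful.
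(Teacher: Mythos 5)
Your overall direction --- start from Theorem~\ref{Thm:Borot} and remove the $\prod_i x_i$, $\prod_i y_i$ normalisation so that the three nested series collapse into $\hat O$ and $\hat c$ --- is the same as the paper's, but the proposal treats the collapse as ``bookkeeping'' and operator commutation, and that is precisely where it fails to be a proof. Three concrete ingredients are missing. First, the middle line of $\vec O$ in \eqref{O-Operator}, namely $\sum_r(\partial_p+\tfrac{v}{p})^r\exp\big(v\tfrac{S(\hbar v\partial_p)}{S(\hbar\partial_p)}\log p-v\log p\big)\big|_{p=x(y)y}$, does not disappear by renormalising $W_{g,n}$: it has to be evaluated in closed form. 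The paper does this by recognising the exponential as $(\hbar/p)^v\,\Gamma(\tfrac{1+v}{2}+\tfrac p\hbar)/\Gamma(\tfrac{1-v}{2}+\tfrac p\hbar)$ and invoking the Erd\'elyi--Tricomi expansion of a ratio of Gamma functions via generalised Bernoulli polynomials \eqref{GammaExpansion}; your plan never addresses this series at all, and without it the $v$- and $u$-series do not merge into the single $u$-series of $\hat O$. Second, the weights $c(u_I,x_I)$ of Theorem~\ref{Thm:Borot} carry the Euler operator $S(\hbar u_i x_i\partial_{x_i})$ acting on $W_{|I|}\prod_i x_i$ (multiplicative shifts $x\mapsto e^{\pm\hbar u/2}x$), while the target $\hat c(u_I,x_I)$ carries $S(\hbar u_i\partial_{x_i})$ (additive shifts $x\mapsto x\pm\hbar u/2$). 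These are genuinely different operators; matching coefficients of $\prod u_i^{m_i}$ cannot convert one into the other. The paper needs the extraction of the $(0,1)$-sector \eqref{PhiOu2}, the technical identity of Lemma~\ref{lem:tech} (proved by a separate induction in Appendix~\ref{AppA}), and a Lagrange--B\"urmann change of expansion variable $z=u\tfrac{S(u)}{\cosh(u/2)}x(y)$ to effect this conversion; your single passing mention of a ``Lagrange-inversion avatar'' does not identify where or how this is used. Third, the relation \eqref{BorotThmEq} you start from contains the exceptional term $\delta_{n,1}\Delta_g(x_1)$, which must be shown to cancel against an extra contribution of the simplified operator on the edgeless graph in $\mathcal G_1$ (Lemma~\ref{lem:Delta}); your proposal never mentions $\Delta_g$, so for $n=1$ the formula you would derive is off by exactly this term.

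In short, the proposal correctly identifies the source theorem and the shape of the answer, and correctly flags the diagonal $I=\{j,j\}$ case and the $\hbar$-grading as points to check, but the two analytic identities that make the simplification true (the Gamma-ratio expansion and the Lemma~\ref{lem:tech}/Lagrange--B\"urmann step) and the $\Delta_g$ cancellation are not reducible to the commutation and coefficient-matching arguments you describe. As written, the plan would stall at the point where the multiplicative shifts must become additive ones.
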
\noindent
The functional relation in \eqref{MainThmEq} is a tremendous simplification of all previously known functional relations. It can, for instance, be used to compute explicit formulae for intersection numbers on the moduli space of complex curves via its Laplace transform \cite{Hock:2023qii}.

The paper is organised as follows, we prove the main Theorem in Sec.\ref{Sec.simpl} in two steps. First, we recall the functional relation of \cite{Borot:2021thu}, which is proved in \cite{Alexandrov:2022ydc} to be the $x$-$y$ symplectic transformation, and simplify their operator $\vec{O}$ in Sec.\ref{Sec.vecO}. In Sec.\ref{Sec.x}, we eliminate explicit factors of $x(y)$ and $y$ and conclude the Theorem. We continue in Sec.\ref{Sec.Ex} with some examples which underpins that the Theorem gives the most canonical functional relation. In the case of $(g,n)=(2,1)$, the functional relation is explicitly verified with the Airy spectral curve. The new simple formula leads in Sec.\ref{Sec.comb} to a further combinatorial interpretation taking into account the expansion of $W_n$ and $S$ in $\hbar$. We finish in Sec.\ref{Sec.free} with an application of the functional relation to higher order free probability. We conclude with the most natural higher order moment-cumulant functional relation in free probability.

\subsection*{Acknowledgement}
I would like to thank Olivier Marchal, Sergey Shadrin and Raimar Wulkenhaar for their comments, and Petr Dunin-Barkowski for some explanations on his work. I also want to thank the reviewers for their comments which improved the readability of the article. This work was supported through
the Walter-Benjamin fellowship\footnote{``Funded by
	the Deutsche Forschungsgemeinschaft (DFG, German Research
	Foundation) -- Project-ID 465029630}.

\section{Simplification of the functional relation}\label{Sec.simpl}
First derived in Theorem \cite[Thm. 4.10]{Bychkov:2021hfh}, the functional relation in terms of \eqref{Wx} was already applied in \cite[Thm. 3.4]{Borot:2021thu} to generalise the moment-cumulant formula in higher order free probability (even to higher genus). It was then conjectured in \cite[Conj. 3.13]{Borot:2021thu} and proved in \cite{Alexandrov:2022ydc} that the functional relation holds indeed for any spectral curve with meromorphic $x,y$ and simple distinct ramification points.

We will take the formula appearing in \cite{Borot:2021thu}, which was derived rigorously in \cite{JEP_2022__9__1121_0,Bychkov:2020yzy,Bychkov:2021hfh}. It is also worth mentioning that the later consideration is a special case of \cite[Thm. 4.10]{Bychkov:2021hfh} with $\psi(p)=\log(p)$. In the case of $g=0$, the later kind of simplification was already achieved in \cite{Hock:2022wer}. 

The graphs appearing in the functional relation are characterised in the following way:
\begin{definition}\label{def:graph}
	Let $\mathcal{G}_{n}$ be the set of connected bicoloured graphs $\Gamma$ with $\bigcirc$-vertices and  $\bullet$-vertices, where the number of $\bigcirc$-vertices is $n$. A graph $\Gamma$ satisfies the following conditions:
	\begin{itemize}
		\item[-] the  $\bigcirc$-vertices are labelled from $1,...,n$
		\item[-] edges are only connecting $\bullet$-vertices with $\bigcirc$-vertices
		\item[-] $\bullet$-vertices have valence $\geq 2$.
	\end{itemize}
	For a graph $\Gamma\in \mathcal{G}_{n}$, let $r_{i}(\Gamma)$ be the valence of the $i^{\text{th}}$ $\bigcirc$-vertex. 
	
	Let $I\subset \{1,...,n\}$ be the set associated to a $\bullet$-vertex, where $I$ is the set of labels of $\bigcirc$-vertices connected to this $\bullet$-vertex. Let $\mathcal{I}(\Gamma)$ be the set of all sets $I$ for a given graph $\Gamma\in \mathcal{G}_{n}$.
\end{definition}\noindent
The automorphism group $\mathrm{Aut}(\Gamma)$ consists of permutations of edges or $\bullet$-vertices which preserve the structure of $\Gamma$ considering the labels. A graph $\Gamma\in \mathcal{G}_n$ is up to automorphisms completely characterised by the set $\mathcal{I}(\Gamma)$.

\begin{figure}[h]
	\scalebox{1}{
\begingroup%
  \makeatletter%
  \providecommand\color[2][]{%
    \errmessage{(Inkscape) Color is used for the text in Inkscape, but the package 'color.sty' is not loaded}%
    \renewcommand\color[2][]{}%
  }%
  \providecommand\transparent[1]{%
    \errmessage{(Inkscape) Transparency is used (non-zero) for the text in Inkscape, but the package 'transparent.sty' is not loaded}%
    \renewcommand\transparent[1]{}%
  }%
  \providecommand\rotatebox[2]{#2}%
  \newcommand*\fsize{\dimexpr\f@size pt\relax}%
  \newcommand*\lineheight[1]{\fontsize{\fsize}{#1\fsize}\selectfont}%
  \ifx\svgwidth\undefined%
    \setlength{\unitlength}{99.05496259bp}%
    \ifx\svgscale\undefined%
      \relax%
    \else%
      \setlength{\unitlength}{\unitlength * \real{\svgscale}}%
    \fi%
  \else%
    \setlength{\unitlength}{\svgwidth}%
  \fi%
  \global\let\svgwidth\undefined%
  \global\let\svgscale\undefined%
  \makeatother%
  \begin{picture}(1,0.38191891)%
    \lineheight{1}%
    \setlength\tabcolsep{0pt}%
    \put(0,0){\includegraphics[width=\unitlength,page=1]{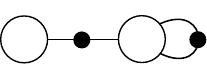}}%
    \put(0.11784883,0.14880473){\makebox(0,0)[t]{\lineheight{1.25}\smash{\begin{tabular}[t]{c}1\end{tabular}}}}%
    \put(0.69527614,0.15836527){\makebox(0,0)[t]{\lineheight{1.25}\smash{\begin{tabular}[t]{c}2\end{tabular}}}}%
    \put(0,0){\includegraphics[width=\unitlength,page=2]{graphEx.pdf}}%
  \end{picture}%
\endgroup%
}
	\caption{This is an example of a graph $\Gamma$ in $\mathcal{G}_2$ of Definition \ref{def:graph}. The set of labels associated to the $\bullet$-vertices is $\mathcal{I}(\Gamma)=\{(1,2),(1,2),(1,2),(2,2) \}$. The three $\bullet$-vertices connecting the two different $\bigcirc$-vertices can be permuted by automorphisms, as well as the two edges of the other $\bullet$-vertex. This gives in total $|\mathrm{Aut}(\Gamma)|=3!\cdot 2!=12$. }
	\label{fig:graphEx}
\end{figure}

Now, we can cite the Theorem (which is Conjecture 3.13 in \cite{Borot:2021thu} and Theorem 1.8 in \cite{Alexandrov:2022ydc}), which will be simplified (we adapted Theorem 3.4 together with Definition 3.3 of \cite{Borot:2021thu} to our notation\footnote{We have used the following identifications: $w_i\to x_i$, $X_i\to \frac{1}{y_i}$, $G_{|I|}^\vee (w_I)\to W_{|I|}(x_i)\prod_{i\in I}x_i$ and  $G_{|I|} (X_I)\to W^\vee_{|I|}(y_i)\prod_{i\in I}y_i$})

\begin{theorem}[\cite{Borot:2021thu}]\label{Thm:Borot}
	Let $S(x)=\frac{e^{x/2}-e^{-x/2}}{x}=\sum_{n=0}^\infty\frac{x^{2n}}{2^{2n}(2n+1)!}$,
	\begin{align}\label{cv}
		c(u_I,x_I):=\bigg(\prod_{i\in I}\hbar u_i S(\hbar u_ix_i\partial_{x_i})\bigg)\big( W_{|I|}(x_I)\prod_{i\in I}x_i\big)
	\end{align}
	together with the special case 
	for $I=\{j,j\}$
	\begin{align}\label{specialc}
		c(u_{I},x_I):=(\hbar u_j S(\hbar u_jx_j\partial_{x_j}))(\hbar u_j S(\hbar u_j x \partial_{x}))\bigg(W_{2}(x_j,x)x_jx-\frac{x_j x}{(x_j-x)^2}\bigg)\bigg\vert_{x=x_j}.
	\end{align}
	Let 
	\begin{align}\label{O-Operator}
		\vec{O} (x_i):=&\sum_{m\geq0} \big(-y_i\partial_{y_i}\big)^m\bigg(-\frac{y_ix'_i(y_i)}{x_i(y_i)}\bigg)\\\nonumber
		&\cdot [v^m_i]\sum_{r\geq 0}\bigg(\partial_p+\frac{v_i}{p}\bigg)^r\exp\bigg(v_i\frac{S(\hbar v_i\partial_p)}{S(\hbar \partial_p)}\log(p)-v_i\log(p)\bigg)\bigg\vert_{p=x_i(y_i)y_i}\\\nonumber
		&\cdot [u_i^r]\frac{\exp\bigg(\hbar u_iS(\hbar u_ix_i\partial_{x_i})(x_iW_1(x_i)-\frac{1}{\hbar})-u_i(x_i(y_i)y_i-1)\bigg)}{\hbar u_iS(\hbar u_i)}.
	\end{align}
Then we have for $2g-2+n>0$
\begin{align}\label{BorotThmEq}
	W^\vee_{g,n}(y_1,...,y_n) \prod_{i=1}^ny_i=\delta_{n,1}\Delta_g(x_1)+[\hbar^{2g-2+n}]\sum_{\Gamma\in\G_n}\frac{1}{|\mathrm{Aut}(\Gamma)|}\prod_{i=1}^n\vec{O} (x_i)\prod_{I\in\I(\Gamma)}c(u_I,x_I),
\end{align}
where 
\begin{align}\label{delta}
	\Delta_g(x)=&[\hbar^{2g}]\sum_{m\geq0}\big(-y\partial_{y}\big)^m[v^{m+1}]\exp\bigg(v\frac{S(\hbar v\partial_p)}{S(\hbar \partial_p)}\log(p)-v\log(p)\bigg)\bigg\vert_{p=x(y)y}\!\!\!\!\!\!\!\!\!\!\!\!
	\big(-y\partial_{y}\big)(x(y)y)
\end{align}
\end{theorem}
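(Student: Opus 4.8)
The identity is the $\psi=\log$ specialisation of the Bychkov--Dunin-Barkowski--Kazarian--Shadrin closed formula for $n$-point functions, together with the identification, due to Alexandrov and the same authors, of that specialisation with the genuine $x$-$y$ symplectic transformation of topological recursion. Accordingly the plan is to combine two independent ingredients: an \emph{explicit} computation of the dual correlators for hypergeometric (Orlov--Scherbin) spectral curves via the Fock-space operator formalism, and a \emph{universality} argument that promotes the resulting formula to all meromorphic curves with simple distinct ramification points.

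First I would establish the closed formula at the level of the hypergeometric KP tau-function. The correlators $W_n$ of an Orlov--Scherbin model are vacuum expectation values of products of normally ordered vertex operators on the bosonic (equivalently fermionic) Fock space, with the weight function entering the one-body part of the Hamiltonian; here the relevant specialisation is $\psi(p)=\log p$, which matches $\omega_{0,1}=y\,dx$. The dual correlators $W^\vee_n$ arise from the same expectation value after the involution exchanging the two families of KP times, i.e. the roles of $x$ and $y$. Moving the swapped vertex operators past one another via their commutation relations produces precisely the kernels $S(\hbar u\,x\partial_x)$ and the ratio $S(\hbar v\partial_p)/S(\hbar\partial_p)$: these are the principal-specialisation (quantum dilogarithm) factors attached to each creation/annihilation pairing. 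Applying Wick's theorem to the resulting product of exponentials organises the contractions into bipartite graphs, the $\bullet$-vertices recording internal pairings and the $\bigcirc$-vertices the $n$ external insertions; passing to the connected generating function by the exponential formula supplies the sum over connected $\Gamma\in\G_n$ weighted by $1/|\mathrm{Aut}(\Gamma)|$. In this bookkeeping $c(u_I,x_I)$ is the contribution of a $\bullet$-vertex of valence $|I|$, built from the moment $W_{|I|}$ dressed by the $S$-operators; $\vec O(x_i)$ is the external-leg operator implementing the change of variables $x_i\mapsto y_i$ together with the $\hbar$-deformed Fourier/Laplace kernel; and $\Delta_g(x_1)$ collects the $n=1$ disconnected correction, i.e. the genus-$g$ free-energy term carrying no $\bullet$-vertex.

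Second I would upgrade this from the integrable setting to arbitrary admissible curves. The decisive structural fact is that the solution of topological recursion is local and universal: for a fixed Bergman kernel, $\omega_{g,n}$ is a universal expression in the Taylor coefficients of $(x,y)$ at the ramification points, and the same holds for $\omega^\vee_{g,n}$ built from $(y,x)$. Consequently \eqref{BorotThmEq} need only be verified as an identity of formal series in these local data, so it suffices to pin down finitely many universal coefficients at each order in $\hbar$. The cleanest way to close the argument is through the abstract loop equations: one checks that the right-hand side, reassembled into symmetric differentials via \eqref{cor}--\eqref{cor2}, satisfies the linear and quadratic loop equations associated with $(y,x)$, carries poles only at the ramification points of $y$, and meets the normalisation conditions; since these properties characterise the topological recursion correlators uniquely, the right-hand side must equal $\omega^\vee_{g,n}$. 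The Orlov--Scherbin computation of the first prong then fixes the universal coefficients against the local data of the general curve.

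The main obstacle is this universality step rather than the algebra of the first part: one must control that the Fock-space formula, a priori valid only in the integrable regime, continues to hold once $x,y$ are arbitrary meromorphic functions, and that the nested formal power series in $\hbar$, $u_i$ and $v_i$ converge as formal objects and may be manipulated termwise. A secondary but genuine difficulty is the correct treatment of the degenerate pieces: the diagonal case $I=\{j,j\}$ requires subtracting the $\omega_{0,2}$ singularity $x_jx/(x_j-x)^2$ before taking the coincidence limit $x=x_j$, and the $n=1$ sector needs the separate term $\Delta_g$ to account for contributions with no internal $\bullet$-vertex. Keeping the automorphism factors and these boundary terms consistent across the two prongs of the argument is where the bulk of the careful work lies.
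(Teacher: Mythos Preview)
The paper does not prove Theorem~\ref{Thm:Borot}; it is quoted verbatim as an external result (adapted to the paper's notation from \cite[Thm.~3.4, Def.~3.3]{Borot:2021thu}, with the validity for general spectral curves supplied by \cite{Alexandrov2022}). There is therefore no in-paper argument to compare your proposal against: the author's entire ``proof'' is the citation, and the paper's own work begins only afterwards, with the simplification of $\vec O$ and the elimination of the explicit $x,y$ factors.

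That said, your sketch is a fair high-level summary of what the cited references actually do: the first prong (bosonic/fermionic Fock space, vertex-operator commutations producing the $S$-kernels, Wick $\to$ bipartite graphs, connected/exponential formula) is the content of \cite{JEP_2022__9__1121_0,Bychkov:2020yzy,Bychkov:2021hfh}, and the second prong (promotion from the hypergeometric/Orlov--Scherbin class to arbitrary meromorphic $x,y$ with simple distinct ramifications, via the abstract loop equations and uniqueness of TR solutions) is precisely the strategy of \cite{Alexandrov2022}. Your description of $\Delta_g$ as a ``genus-$g$ free-energy term'' is slightly off---in the paper it is the residual contribution from the unique graph in $\mathcal G_1$ with a single $\bigcirc$-vertex and no edges, arising from the $u^{-1}$ piece of the third line of $\vec O$; but this is a matter of interpretation rather than a mathematical error. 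As a blind reconstruction of the literature proof your outline is essentially correct; just be aware that within this paper the theorem is input, not output.
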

\begin{remark}
	Note that \eqref{specialc} differs from \cite[Def. 3.3.]{Borot:2021thu}, where it is probably just a misprint by copying it from \cite{JEP_2022__9__1121_0}. In \eqref{specialc}, we first take the action of $S$ and then take the limit $x\to x_j$. Equally, we understand the definition \eqref{cv} of $c(u_I,x_I)$ if several variables coincide.
\end{remark}
The functional relation of this Theorem is given in terms of three different sets of nested formal power series (in $\hbar$, $v_i$ and $u_j$). The important observation of Theorem \ref{Thm:main} is that the functional relation of Theorem \ref{Thm:Borot} is expressed between correlators
\begin{align*}
	W^{\vee}_{g,n}(y_1,...,y_n) \prod_{i=1}^ny_i\qquad \text{and}\qquad  W_{m}(x_1,...,x_m)\prod_{i=1}^mx_i,
\end{align*}
rather than between 
\begin{align}\label{TRCor}
	W^{\vee}_{g,n}(y_1,...,y_n) \qquad \text{and}\qquad  W_{g,m}(x_1,...,x_m)
\end{align}
which are actually computed by TR.
We will provide the functional relation between the second two sets of correlators \eqref{TRCor} computed by TR in terms of one formal power series in $\hbar$. The result will be much more canonical in the sense that all explicit factors of $x_i$ and $y_i$ on the rhs of \eqref{BorotThmEq} cancel and it is much more efficient for explicit computations. The cancellation of possible poles coming from factors of $x_i$ and $y_i$ was not proven before.

The proof of Theorem \ref{Thm:main} proceeds in two steps, firstly we simplify the operator $\vec{O}$. Secondly, we commute all explicit factors of $x_i$ in the weight $c$ with the operator $\vec{O}$ after its simplification.

\subsection{Simplification of $\vec{O}$}\label{Sec.vecO}
The second line of the operator $\vec{O}$ can be written for $v\in \mathbb{N}$ (this was already used in \cite{Bychkov:2020yzy})
\begin{align*}
	&\exp\bigg(v\frac{S(\hbar v\partial_p)}{S(\hbar \partial_p)}\log(p)-v\log(p)\bigg)=	\exp\bigg[\bigg(\frac{e^{v\hbar\partial_p/2}-e^{-v\hbar\partial_p/2}}{e^{\hbar\partial_p/2}-e^{-\hbar\partial_p/2}}-v\bigg)\log(p)\bigg]\\
	=&\exp\bigg[\bigg(\sum_{i=0}^{v-1}e^{(i+\frac{1-v}{2})\hbar \partial_p}-v\bigg)\log(p)\bigg]
	=\bigg(\frac{\hbar}{p}\bigg)^v\,\prod_{i=0}^{v-1}\bigg(i+\frac{1-v}{2}+\frac{p}{\hbar}\bigg)\\
	=&\bigg(\frac{\hbar}{p}\bigg)^v\frac{\Gamma(\frac{1+v}{2}+\frac{p}{\hbar})}{\Gamma(\frac{1-v}{2}+\frac{p}{\hbar})},
\end{align*}
where in the third step $e^{a\partial_p}f(p)=f(a+p)$ is used as a formal derivative.
For the last expression in terms of the $\Gamma$-function the assumption that $v\in \mathbb{N}$ can be relaxed to  $v\in\mathbb{C}$ due to the uniqueness Theorem of Bohr-Mollerup. 
The asymptotic expansion of the ratio of two $\Gamma$-functions is well-known \cite{pjm/1102613160,WOS:000368300500003} 
\begin{align}\label{GammaExpansion}
	\frac{\Gamma(z+t)}{\Gamma(z+s)}\sim z^{t-s}\sum_{n\geq0}\frac{(-1)^nB_n^{(t-s+1)}(t) (s-t)_n}{n!}z^{-n}\qquad \text{as $z\to \infty$,}
\end{align}
where $(t)_n=t (t+1)..(t+n-1)$ is the Pochhammer symbol and $B^{(a)}_n(t)$ is called the generalised Bernoulli polynomial defined as the generating function of
\begin{align*}
	\frac{x^ae^{tx}}{(e^x-1)^a}=\sum_{n=0}^\infty B_n^{(a)}(t)\frac{x^n}{n!}.
\end{align*}
The asymptotic expansion of the ratio of $\Gamma$-functions can be used, because the asymptotic behaviour $z\to \infty$ corresponds later to $\hbar\to 0$. One can therefore identify the formal expansion of the left hand side of \eqref{GammaExpansion} with that of the right hand side\footnote{Note that here $f(x)\sim \sum_n a_n x^n$ means that $|f(x)-\sum_{n=0}^N a_n x^n|\in \mathcal{O}(x^{N+1})$}.

Inserting the asymptotic expansion together with the definition of the generalised Bernoulli polynomials yields (with $z=\frac{p}{\hbar}$, $t=\frac{1+v}{2}$ and $s=\frac{1-v}{2}$ in \eqref{GammaExpansion})
\begin{align}\nonumber
	&\exp\bigg(v\frac{S(\hbar v\partial_p)}{S(\hbar \partial_p)}\log(p)-v\log(p)\bigg)=\bigg(\frac{\hbar}{p}\bigg)^v\frac{\Gamma(\frac{1+v}{2}+\frac{p}{\hbar})}{\Gamma(\frac{1-v}{2}+\frac{p}{\hbar})}\\\nonumber
	=&\sum_{n\geq0} v(v-1)...(v-n+1)\bigg(\frac{\hbar}{p}\bigg)^n[\hbar^n]\frac{\hbar^{v+1}}{(e^{\hbar/2}-e^{-\hbar/2})^{v+1}}\\\label{expexpansion}
	=&\sum_{n\geq0} v(v-1)...(v-n+1)\bigg(\frac{\hbar}{p}\bigg)^n[\hbar^n]\frac{1}{S(\hbar)^{v+1}}.
\end{align}
It remains the action of $\bigg(\partial_p+\frac{v_i}{p}\bigg)^r$ in the second line of \eqref{O-Operator} which becomes now fairly easy so that the entire second line of \eqref{O-Operator} simplifies to 
\begin{align}\nonumber
	&\sum_{r\geq 0}\bigg(\partial_p+\frac{v}{p}\bigg)^r\exp\bigg(v\frac{S(\hbar v\partial_p)}{S(\hbar \partial_p)}\log(p)-v\log(p)\bigg)\bigg\vert_{p=x(y)y}R_r\\\nonumber
	=&\sum_{r\geq 0}\bigg(\partial_p+\frac{v}{p}\bigg)^r\sum_{n=0}^\infty v(v-1)...(v-n+1)\bigg(\frac{\hbar}{p}\bigg)^n\bigg\vert_{p=x(y)y}\bigg([\hbar^n]\frac{1}{S(\hbar)^{v+1}}\bigg)R_r\\\label{Rr}
	=&\sum_{r,n\geq 0}v(v-1)...(v-(n+r)+1)\frac{1}{(x(y)y)^{n+r}}\bigg([\tilde{u}^{n}]\frac{1}{S(\hbar\tilde{u})^{v+1}}\bigg)R_r,
\end{align}
where $R_r$ is some rest depending on $r$. This rest was in \eqref{O-Operator} of the form $R_r=[u^r]\tilde{R}(u)$ for $r\geq 0$. The formal expansion in $\tilde{u}$ and $u$ of $R_r$ will be now combined in one formal expansion. 

In the special case of $n=1$ in Theorem \ref{Thm:Borot}, combining the formal expansion of $\tilde{u}$ and $u$ needs some special treatment. For this, we define
\begin{align*}
	\delta=\begin{cases}
		1, \quad \text{if $n=1$ in \eqref{BorotThmEq} and $\Gamma=\bigcirc\in \mathcal{G}_1$}\\
		0,\quad \text{else},
	\end{cases}
\end{align*}
where $\bigcirc\in \mathcal{G}_1$ is the unique graph consisting of just one $\bigcirc$-vertex and no edges.
Now, note that the formal $[u_1^r]$-expansion of the last line of \eqref{O-Operator} in the special case of $n=1$ in \eqref{BorotThmEq} and $\Gamma=\bigcirc$ is of the form $\frac{1}{u_i \hbar}$, because the weights $c(u_I,x_I)=1$ for the unique graph $\Gamma=\bigcirc\in \mathcal{G}_1$. However, this special term $\frac{1}{u_1\hbar}$ does effectively not contribute in Theorem \ref{BorotThmEq}, since the sum over $r$ starts with 0. 

Considering this, the combination of the formal $\tilde{u}$ and $u$-expansion in \eqref{Rr} can be performed uniformly . We rearrange the sum over $n$ and $r$ into a single sum over $k=n+r$ by identifying $\tilde{u}$ with $u$, where we have to take $\frac{\delta}{u\hbar }$ into account via
\begin{align*}
	&\sum_{r,n\geq 0}v(v-1)...(v-(n+r)+1)\frac{1}{(x(y)y)^{n+r}}[\tilde{u}^{n}]\frac{1}{S(\hbar\tilde{u})^{v+1}}[u^r]\tilde{R}(u)\\
	=&\sum_{r,n\geq 0}v(v-1)...(v-(n+r)+1)\frac{1}{(x(y)y)^{n+r}}[\tilde{u}^{n}]\frac{1}{S(\hbar\tilde{u})^{v+1}}[u^r]\bigg(\tilde{R}(u)-\frac{\delta}{u\hbar }\bigg)\\
	=&\sum_{k\geq 0}v(v-1)...(v-k+1)\frac{1}{(x(y)y)^{k}}[u^{k}]\frac{\tilde{R}(u)}{S(\hbar u)^{v+1}}\\
	&- \sum_{k\geq 0}v(v-1)...(v-k+1)\frac{1}{(x(y)y)^{k}}[u^{k+1}]\frac{\delta}{S(\hbar u)^{v+1}\hbar}.
\end{align*}

Finally, we commute all factors of $\frac{1}{y}$ out of the operator $\vec{O}$. In \eqref{O-Operator}, the factor $\bigg(-\frac{y_ix'_i(y_i)}{x_i(y_i)}\bigg)$ decreases $\frac{1}{y^{k}}$ to $\frac{1}{y^{k-1}}$. Then, writing for some formal power series $P(v)$ and some function $f(y)$
\begin{align*}
	&\sum_{m\geq0} \big(-y\partial_{y}\big)^m[v^m]P(v) f(y)
	=P(-y\partial_y)f(y)
\end{align*}
leads for any $k$ to
\begin{align}\label{V1}
	&\sum_{m\geq0} \big(-y\partial_{y}\big)^m[v^m]\frac{v (v-1)...(v-k+1)}{y^{k-1} S(\hbar u)^v} f(y)\\\nonumber
	=&(-y\partial_y)(-y\partial_y-1)...(-y\partial_y-k+1)S(u\hbar)^{y\partial_y}\frac{f(y)}{y^{k-1}}\\\nonumber
	=&(-y\partial_y)(-y\partial_y-1)...(-y\partial_y-k+1)\frac{f(S(\hbar u)y)}{(S(\hbar)y)^{k-1}}\\\label{movey}
	=&y\cdot (-\partial_y)^k\frac{f(S(\hbar u)y)}{S(\hbar)^{k-1}}=y(-\partial_y)^kS(u\hbar)^{y\partial_y}\frac{f(y)}{S(\hbar)^{k-1}}.
\end{align}
Here, we have used the well-know formal derivative $a^{y\partial_y}$ on some smooth function $\tilde{f}(y)$ as 
\begin{align}\label{formaldif}
	a^{y\partial_y}\tilde{f}(y)=\tilde{f}(a y).
\end{align}
Summarising the upper steps, with 
\begin{align*}
	f(y)=\bigg(-\frac{x'(y)}{S(u\hbar)x(y)^{k+1} \hbar u}\bigg)\bigg[
	\frac{\exp\bigg(\hbar uS(\hbar ux\partial_{x})\bigg(x(y)W_1(x(y))-\frac{1}{\hbar}\bigg)-u(x(y)y-1)\bigg)}{S(u\hbar)}-\delta\bigg]
\end{align*}

the following representation of the operator $\vec{O}$ is concluded:
\begin{lemma}\label{lem:simpleO}
	The operator $\vec{O}$ of \eqref{O-Operator} can be written as
	\begin{align}\label{Osim}
		\vec{O} (x(y))=&y\sum_{k\geq0}(-\partial_y)^k[u^k]S(u\hbar)^{y\partial_y}\bigg(-\frac{x'(y)}{S(u\hbar)^{k+1}x(y)^{k+1} \hbar u}\bigg)\\\nonumber
		&\cdot
		\exp\bigg(\hbar uS(\hbar ux\partial_{x})x(y)W_1(x(y))-ux(y)y\bigg)\\\nonumber
		&-y\sum_{k\geq0}(-\partial_y)^k[u^{k}]S(u\hbar)^{y\partial_y}\bigg(-\frac{x'(y)}{S(u\hbar)^{k}x(y)^{k+1} \hbar u}\bigg)\delta,
	\end{align}
where $\delta$ means that the last term does only contribute in \eqref{BorotThmEq} of Theorem \ref{Thm:Borot} if $n=1$ and $\Gamma\in \mathcal{G}_1$ is the unique graph consisting of just one $\bigcirc$-vertex and no edges.
\end{lemma}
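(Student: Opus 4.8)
The plan is to substitute the series expansion \eqref{expexpansion} for the middle exponential factor of $\vec{O}$ directly into \eqref{O-Operator}, and then track how the various pieces reorganise. First I would record the identity
\begin{align*}
\exp\bigg(v\frac{S(\hbar v\partial_p)}{S(\hbar\partial_p)}\log(p)-v\log(p)\bigg)=\bigg(\frac{\hbar}{p}\bigg)^v\frac{\Gamma(\frac{1+v}{2}+\frac{p}{\hbar})}{\Gamma(\frac{1-v}{2}+\frac{p}{\hbar})}=\sum_{n\geq0}v(v-1)\cdots(v-n+1)\bigg(\frac{\hbar}{p}\bigg)^n[\hbar^n]\frac{1}{S(\hbar)^{v+1}},
\end{align*}
which is established before the lemma using \eqref{GammaExpansion} and the Bohr--Mollerup theorem to allow $v\in\mathbb{C}$; the only subtlety worth a sentence is that the ratio-of-Gamma-functions asymptotic expansion is being used as a genuine identity of formal power series in $\hbar$ after the extraction $[\hbar^n]\frac{1}{S(\hbar)^{v+1}}$, so the $p^{-n}$ coefficients are polynomial in $v$ as required.

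Next I would feed this into the operator $(\partial_p+\frac{v}{p})^r$. Because $(\partial_p+\frac{v}{p})p^{-n}=-(n+v)p^{-n-1}$ up to the shift built into the falling factorial, iterating $r$ times turns $v(v-1)\cdots(v-n+1)p^{-n}$ into $v(v-1)\cdots(v-(n+r)+1)p^{-n-r}$, which is exactly the computation displayed right before the lemma; evaluating at $p=x(y)y$ and collapsing the double sum over $(n,r)$ into a single sum over $k=n+r$ (identifying the bookkeeping variable $\tilde u$ with $u$) produces the factor $v(v-1)\cdots(v-k+1)(x(y)y)^{-k}[u^k]\tilde R(u)S(\hbar u)^{-v-1}$, together with the spurious $u^{-1}$ term carried along as the $\delta$-correction. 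Then I would apply the ``move $y$ out'' computation \eqref{movey}: the prefactor $-\frac{y x'(y)}{x(y)}$ lowers the $y$-power by one, the sum $\sum_{m}(-y\partial_y)^m[v^m]P(v)$ becomes $P(-y\partial_y)$, and using $a^{y\partial_y}\tilde f(y)=\tilde f(ay)$ together with the operator identity $(-y\partial_y)(-y\partial_y-1)\cdots(-y\partial_y-k+1)\,y^{-(k-1)}g(y)=y(-\partial_y)^k g(y)$ converts the falling factorial in $-y\partial_y$ into the clean $y(-\partial_y)^k$ in front and the $S(\hbar)^{v+1}$ into $S(\hbar)^{y\partial_y}$ acting as the scaling $y\mapsto S(\hbar u)y$. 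Collecting the surviving exponential $\exp(\hbar uS(\hbar u x\partial_x)(x(y)W_1(x(y))-\tfrac1\hbar)-u(x(y)y-1))$ and the remaining powers of $S(\hbar u)$, $S(\hbar)$, $x(y)$ and $u$ yields exactly \eqref{Osim}.

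The main obstacle is purely bookkeeping: keeping the three nested formal variables $\hbar$, $v$, $u$ (and the auxiliary $\tilde u$) straight through the two reindexings, and in particular handling the $u^{-1}$ piece correctly. The operator $\frac{1}{\hbar u S(\hbar u)}[u^r]$ in the last line of \eqref{O-Operator} contributes a term of order $u^{-1}$ which, after the $(n,r)\to k$ collapse, would formally sit at $k=-1$; one must argue that this term only matters for the exceptional graph in $\mathcal{G}_1$ with one $\bigcirc$-vertex and no $\bullet$-vertices (equivalently, that $\Delta_g$ in \eqref{delta} is what absorbs it), which is why it is retained in \eqref{Osim} only with the $\delta$-marker. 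I would make this precise by noting that for every $\Gamma\in\mathcal{G}_n$ with at least one edge each $\hat O$ acts on an expression that is $O(u^0)$ in the corresponding weight variable, so the $u^{-1}$ tail is annihilated by the $[u^k]$, $k\geq0$, extraction; only the edgeless graph leaves an $[u^{-1}]$-type residue, matching the domain $k\geq-1$ in the last line of \eqref{Osim}. Everything else is a direct substitution, so once the indexing is pinned down the lemma follows.
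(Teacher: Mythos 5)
Your proposal follows the paper's own derivation step for step: the Gamma-function rewriting and expansion \eqref{expexpansion}, the action of $(\partial_p+v/p)^r$ with the reindexing $k=n+r$ and the separate $\delta/u$ bookkeeping, and the extraction of the $y$-powers via \eqref{movey}; your justification that the $\delta$-term survives only for the edgeless graph in $\mathcal{G}_1$ also matches the paper's reasoning. (Only a cosmetic slip: $(\partial_p+\tfrac{v}{p})p^{-n}=(v-n)p^{-n-1}$, not $-(n+v)p^{-n-1}$, but the falling-factorial conclusion you draw from it is the correct one.)
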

Lemma \ref{lem:simpleO} shows that all explicit factors of $y$ in $\vec{O} (x(y))$ vanish. Note that the global prefactor of $y$ cancels perfectly the factors of $y_i$ on the lhs of the functional relation \eqref{BorotThmEq} in Theorem \ref{Thm:Borot}.

Following similar lines, we get also another representation for $\Delta_g$: 
\begin{lemma}\label{lem:Delta}
	 $\Delta_g$ in \eqref{delta} can be written as
	 \begin{align*}
	 	\Delta_g(x)=y(-\partial_y)^{2g-1}[\hbar^{2g}]\frac{S(\hbar)^{y\partial_y}}{S(\hbar)^{2g-1}}\bigg(-\frac{x'(y)}{x(y)^{2g}}\bigg).
	 \end{align*}
 \begin{proof}
 	From the definition of $\Delta_g$ in \eqref{delta} and from \eqref{expexpansion}, we first get
 	\begin{align*}
 		&\Delta_g(x)\\
 		=&[\hbar^{2g}]\sum_{m\geq 0}(-y\partial_y)^m[v^{m+1}] \sum_{n\geq 1}v(v-1)...(v-n+1)\bigg(\frac{\hbar}{x(y)y}\bigg)^n[u^n]\frac{1}{S(u)^{v+1}} (-y\partial_y)(x(y)y).
 	\end{align*}
 The term $[\hbar^{2g}]$ selects the
 coefficient for $n=2g$ in the series in $n$. The result does not depend on $\hbar$ anymore. We
 then set $u=\hbar$ and take the $[\hbar^{2g}]$-coefficient of $\frac{1}{S(\hbar)^{v+1}}$. The term $[v^{m+1}]\frac{1}{v}$ can be replaced by $[v^m]$.
 Letting act $\frac{1}{S(\hbar)^{v+1}}=\frac{S(\hbar)^{y\partial_y}}{S(\hbar)}$ as in \eqref{V1} and\eqref{movey} as a formal power series in $\hbar$, we derive
 \begin{align*}
 	=[\hbar^{2g}]\sum_{m\geq 0}(-y\partial_y)^m[v^{m}] (v-1)...(v-2g+1)\frac{(-\partial_{S(\hbar)y})(x(S(\hbar)y) S(\hbar)y)}{x(S(\hbar)y)^{2g}y^{2g-1} S(\hbar)^{2g}}.
 \end{align*}
The factor $\frac{1}{y^{2g-1}}$ can be removed with the same considerations as in \eqref{movey}
\begin{align*}
	=[\hbar^{2g}](-\partial_y)^{2g-1} \frac{-S(\hbar)yx'(S(\hbar)y)-x(S(\hbar)y)}{x(S(\hbar)y)^{2g} S(\hbar)^{2g}}.
\end{align*}
Moving the one explicit factor of $y$ through all $2g-1$ derivatives yields finally
\begin{align*}
	=[\hbar^{2g}]\bigg\{&y(-\partial_y)^{2g-1} \frac{-x'(S(\hbar)y)}{x(S(\hbar)y)^{2g} S(\hbar)^{2g-1}} \\
	&+(2g-1)(-\partial_y)^{2g-2} \frac{x'(S(\hbar)y)}{x(S(\hbar)y)^{2g} S(\hbar)^{2g-1}}\\
	&+(-\partial_y)^{2g-1} \frac{-1}{x(S(\hbar)y)^{2g-1} S(\hbar)^{2g}}\bigg\}.
\end{align*}
The last two lines cancel exactly and we end up with the assertion after extracting the formal derivative $S(\hbar)^{y\partial_y}$ in the first line.
 \end{proof}
\end{lemma}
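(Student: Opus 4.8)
The plan is to follow exactly the computational recipe used in the proof of Lemma~\ref{lem:simpleO}, but now applied to the expression \eqref{delta} defining $\Delta_g$, tracking carefully the single extra explicit factor $(-y\partial_y)(x(y)y)$ which was absent in the operator $\vec O$. First I would substitute the $\Gamma$-function expansion \eqref{expexpansion} into \eqref{delta}, turning the nested exponential into the sum $\sum_{n\geq 0} v(v-1)\cdots(v-n+1)\,(\hbar/(x(y)y))^n[u^n]\,S(u)^{-(v+1)}$. Since $\Delta_g$ extracts the coefficient $[\hbar^{2g}]$ and the only source of powers of $\hbar$ here is the factor $(\hbar/(x(y)y))^n$ together with the $\hbar$-dependence hidden in $S$, the index $n$ is forced to equal $2g$; this is the mechanism that collapses the double series to a single term, exactly as the analogous step in Sec.~\ref{Sec.vecO} collapsed $n+r=k$ into a single sum.

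Next I would set $u=\hbar$ (legitimate since after fixing $n=2g$ the remaining $u$-dependence is only through $S(u)^{-(v+1)}$, and we want the $[\hbar^{2g}]$-coefficient of that), and I would shift the coefficient extraction $[v^{m+1}]\to[v^m]$ by peeling off the leading factor $v$ from the Pochhammer-type product $v(v-1)\cdots(v-2g+1)$, leaving $(v-1)\cdots(v-2g+1)$. Then, using the formal operator identity $a^{y\partial_y}\tilde f(y)=\tilde f(ay)$ together with $\frac{1}{S(\hbar)^{v+1}}=\frac{S(\hbar)^{y\partial_y}}{S(\hbar)}$ and the summation identity $\sum_{m\ge0}(-y\partial_y)^m[v^m]P(v)f(y)=P(-y\partial_y)f(y)$, the remaining sum over $m$ becomes the differential operator $(-y\partial_y-1)\cdots(-y\partial_y-2g+1)$ acting on $\frac{(-\partial_{S(\hbar)y})(x(S(\hbar)y)S(\hbar)y)}{x(S(\hbar)y)^{2g}\,y^{2g-1}\,S(\hbar)^{2g}}$, evaluated after the substitution $p=x(y)y$. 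I would then remove the factor $y^{-(2g-1)}$ using precisely the computation \eqref{movey}, which converts $(-y\partial_y)(-y\partial_y-1)\cdots(-y\partial_y-(k-1))\,\tfrac{g(y)}{y^{k-1}}$ into $y(-\partial_y)^k g(y)$; here $k=2g-1$.

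At this stage one is left with $[\hbar^{2g}](-\partial_y)^{2g-1}$ applied to $\frac{-S(\hbar)yx'(S(\hbar)y)-x(S(\hbar)y)}{x(S(\hbar)y)^{2g}S(\hbar)^{2g}}$ (the numerator being $(-\partial_{S(\hbar)y})$ applied to the product $x(S(\hbar)y)\cdot S(\hbar)y$, hence $-S(\hbar)yx'(S(\hbar)y)-x(S(\hbar)y)$). The final step is to commute the one explicit factor of $S(\hbar)y$ through the $2g-1$ derivatives $(-\partial_y)^{2g-1}$ via the Leibniz rule, which produces three terms: a "main" term $y(-\partial_y)^{2g-1}\tfrac{-x'(S(\hbar)y)}{x(S(\hbar)y)^{2g}S(\hbar)^{2g-1}}$, a correction term $(2g-1)(-\partial_y)^{2g-2}\tfrac{x'(S(\hbar)y)}{x(S(\hbar)y)^{2g}S(\hbar)^{2g-1}}$ coming from one derivative hitting the explicit $y$, and the term $(-\partial_y)^{2g-1}\tfrac{-1}{x(S(\hbar)y)^{2g-1}S(\hbar)^{2g}}$ from the second piece of the numerator. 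The crux is the observation that the last two terms cancel identically: differentiating $\tfrac{1}{x(S(\hbar)y)^{2g-1}}$ once in $y$ gives $-(2g-1)S(\hbar)x'(S(\hbar)y)x(S(\hbar)y)^{-2g}$, so $(-\partial_y)\tfrac{-1}{x(S(\hbar)y)^{2g-1}S(\hbar)^{2g}}=-(2g-1)\tfrac{x'(S(\hbar)y)}{x(S(\hbar)y)^{2g}S(\hbar)^{2g-1}}$, and applying the remaining $(-\partial_y)^{2g-2}$ exactly matches (with opposite sign) the correction term. What survives is the main term, and rewriting $x'(S(\hbar)y)$, $x(S(\hbar)y)$ back via the formal derivative as $S(\hbar)^{y\partial_y}x'(y)$, $S(\hbar)^{y\partial_y}x(y)$ yields $\Delta_g(x)=y(-\partial_y)^{2g-1}[\hbar^{2g}]\tfrac{S(\hbar)^{y\partial_y}}{S(\hbar)^{2g-1}}\bigl(-\tfrac{x'(y)}{x(y)^{2g}}\bigr)$, which is the assertion. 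The only place requiring genuine care — and the main obstacle — is making sure the cancellation of the two unwanted Leibniz terms is exact for \emph{every} $g$ (in particular that no boundary/low-order anomaly appears for small $g$ such as $g=1$, where $2g-1=1$ and $2g-2=0$), and that setting $u=\hbar$ and fixing $n=2g$ genuinely extracts the full $[\hbar^{2g}]$-coefficient with no stray contributions from the $\hbar$-dependence of $S$ inside the already-extracted Pochhammer product; both are bookkeeping checks rather than conceptual difficulties.
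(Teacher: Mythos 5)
Your proposal is correct and follows essentially the same route as the paper's own proof: Gamma-expansion via \eqref{expexpansion}, fixing $n=2g$ by $\hbar$-counting, shifting $[v^{m+1}]\to[v^m]$, removing $y^{-(2g-1)}$ via \eqref{movey}, and cancelling the two unwanted Leibniz terms. Your explicit verification of that cancellation (which the paper only asserts) is accurate, so there is nothing to add.
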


Now, the exceptional appearance of $\Delta_g$ becomes comprehensible, it  vanishes exactly the additional term $\sim \delta$ in $\vec{O}$ of Lemma \ref{lem:simpleO}, which is present for $n=1$ and for the unique graph $\Gamma\in \mathcal{G}_1$ which consists of just one $\bigcirc$-vertex and no edges. This comes from the fact that we have to take the $[\hbar^{2g-1}]$ coefficient of the $\delta$-term in \eqref{Osim}, which becomes
\begin{align*}
	&-[\hbar^{2g-1}]y\sum_{k\geq0}(-\partial_y)^k[u^{k}]S(u\hbar)^{y\partial_y}\bigg(-\frac{x'(y)}{S(u\hbar)^{k}x(y)^{k+1} \hbar u}\bigg)\\
	=&-[\hbar^{2g-1}]y(-\partial_y)^{2g-1}[u^{2g-1}]S(u\hbar)^{y\partial_y}\bigg(-\frac{x'(y)}{S(u\hbar)^{2g-1}x(y)^{2g} \hbar u}\bigg)\\
	=&-[\hbar^{2g}]y(-\partial_y)^{2g-1}\frac{S(\hbar)^{y\partial_y}}{S(\hbar)^{2g-1}}\bigg(-\frac{x'(y)}{x(y)^{2g}}\bigg)\\
	=&-\Delta_g(x),
\end{align*}
where we had to take $k=2g-1$ and merged $\hbar u\to \hbar$. 

As a corollary, we simplified the Theorem \ref{Thm:Borot} to
\begin{corollary}\label{cor:relation}
	The following functional relation holds
	\begin{align}\label{eq:cor-relation}
		W^{\vee}_{g,n}(y_1,...,y_n) =[\hbar^{2g-2+n}]\sum_{\Gamma\in\G_n}\frac{1}{|\mathrm{Aut}(G)|}\prod_{i=1}^n\tilde{O} (x_i)\prod_{I\in\I(\Gamma)}c(u_I,x_I),
	\end{align}
where the operator $\tilde{O}$ is given by
\begin{align}\label{tildeO}
	\tilde{O}(x)=&\sum_{k\geq0}(-\partial_y)^k[u^k]S(u\hbar)^{y\partial_y}\bigg(-\frac{x'(y)}{S(u\hbar)^{k+1}x(y)^{k+1} \hbar u}\bigg)\\\nonumber
	&\cdot
	\exp\bigg(\hbar uS(\hbar ux\partial_{x})\bigg(x(y)W_1(x(y))\bigg)-ux(y)y\bigg)
\end{align}
and $c(u_I,x_I)$ as in Theorem \ref{Thm:Borot}.
\begin{proof}
	Inserting Lemma \ref{lem:simpleO} and \ref{lem:Delta} into Theorem \ref{Thm:Borot}, where $\Delta_g$ vanishes in the special case of $n=1$ due to the previous considerations with a term coming from the unique graph $\Gamma\in \mathcal{G}_1$ which consists of just one $\bigcirc$-vertex and no edges. The prefactor $y_i$ in $\vec{O} (x(y_i))$ of Lemma \ref{lem:simpleO} cancels $\prod_{i=1}^ny_i$ on the lhs of \eqref{BorotThmEq}. Therefore, without the $\delta$-term in \eqref{Osim} and the $y$ prefactor, we identify exactly $\tilde{O}(x)$ from $\vec{O}(x)$.
\end{proof}
\end{corollary}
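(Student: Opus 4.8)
\emph{Proof proposal.} The plan is to obtain \eqref{eq:cor-relation} by a direct substitution of Lemma \ref{lem:simpleO} and Lemma \ref{lem:Delta} into the functional relation \eqref{BorotThmEq} of Theorem \ref{Thm:Borot}. No new analytic ingredient is required; everything is bookkeeping of the explicit prefactors and of the exceptional $\delta$-term.

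First I would use the representation \eqref{Osim} of $\vec O^\vee(x(y))$, which carries a global factor $y$ in front of both the regular sum over $k\geq 0$ and the $\delta$-sum over $k\geq -1$. Since \eqref{BorotThmEq} contains one copy of $\vec O(x_i)$ for each of the $n$ labelled $\bigcirc$-vertices, the $n$ global factors multiply to $\prod_{i=1}^n y_i$, and this cancels exactly the factor $\prod_{i=1}^n y_i$ that multiplies $W^\vee_{g,n}$ on the left-hand side of \eqref{BorotThmEq}. After this cancellation the left-hand side reads $W^\vee_{g,n}(y_1,\dots,y_n)$ and each $\vec O(x_i)$ has been replaced by the sum of its $k\geq 0$ part (now without the prefactor $y$) and its $\delta$-part.

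Next I would identify the $k\geq 0$ part with the operator $\tilde O$ of \eqref{tildeO}. Inside the exponential in \eqref{Osim} one has $\hbar u\, S(\hbar u x\partial_x)\bigl(-\frac1\hbar\bigr)=-u$, because $S(\hbar u x\partial_x)$ fixes constants; this $-u$ cancels the $+u$ coming from $-u\bigl(x(y)y-1\bigr)$, so the exponent collapses to $\hbar u\,S(\hbar u x\partial_x)\bigl(x(y)W_1(x(y))\bigr)-u\,x(y)y$, and together with the surviving prefactor $-\frac{x'(y)}{S(u\hbar)^{k+1}x(y)^{k+1}\hbar u}$, the rescaling operator $S(u\hbar)^{y\partial_y}$ and $(-\partial_y)^k[u^k]$ this is precisely \eqref{tildeO}. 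It then remains to dispose of the $\delta$-part. By Lemma \ref{lem:simpleO} it contributes only when $n=1$ and $\Gamma$ is the unique edgeless graph in $\G_1$; for that term one extracts $[\hbar^{2g-2+n}]=[\hbar^{2g-1}]$ of the $\delta$-sum, and merging $\hbar u\to\hbar$ (which forces $k=2g-1$) and comparing with the closed form of Lemma \ref{lem:Delta} shows the contribution equals $-\Delta_g(x_1)$. This exactly cancels the term $\delta_{n,1}\Delta_g(x_1)$ on the right-hand side of \eqref{BorotThmEq}. With both the $\delta$-part and $\Delta_g$ eliminated, \eqref{BorotThmEq} becomes \eqref{eq:cor-relation}.

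The delicate point, the one I would write out in detail, is this last matching of the $\delta$-part of \eqref{Osim} against $\Delta_g$ of Lemma \ref{lem:Delta}: one must check that the substitution $\hbar u\to\hbar$ is compatible with the $[u^k]$ extraction, that the powers of $S(\hbar)$ line up once the rescaling $S(\hbar u)^{y\partial_y}$ is applied, and that the single explicit factor of $y$ threads correctly through the $2g-1$ derivatives $(-\partial_y)^{2g-1}$. This is the same type of manipulation already performed in \eqref{movey} and in the proof of Lemma \ref{lem:Delta}, so no genuinely new obstacle appears, but the signs and the Pochhammer shifts have to be tracked carefully.
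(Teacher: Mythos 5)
Your proposal is correct and follows essentially the same route as the paper: substitute the simplified forms of $\vec O$ and $\Delta_g$ from Lemmas \ref{lem:simpleO} and \ref{lem:Delta} into \eqref{BorotThmEq}, cancel the global prefactors $y_i$ against $\prod_i y_i$ on the left-hand side, observe that the exponent collapses because $\hbar u\,S(\hbar u x\partial_x)(-\tfrac1\hbar)=-u$ cancels the $+u$ from $-u(x(y)y-1)$, and match the $\delta$-term (present only for $n=1$ and the edgeless graph) against $\delta_{n,1}\Delta_g(x_1)$ exactly as in the computation preceding the corollary. Your write-up is in fact slightly more explicit than the paper's on the exponent collapse, but there is no substantive difference in method.
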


\subsection{Vanishing of the explicit $x(y)$-dependence}\label{Sec.x}
In the theory of topological recursion more precisely for quantum spectral curves \cite{Norbury:2015lcn}, the primitive of the correlators is of great importance and appears also naturally in our functional relation. We define
\begin{align}\label{defPhi}
	\Phi_n(x_1,...,x_n):=\int_{o}^{x_1} dx'_1 ...\int_o^{x_n} dx'_n W_{n}(x'_1,...,x'_n),
\end{align}
where $o$ is some arbitrary base point. This definition holds for all genera $g$ respectively, i.e. $\Phi_{g,n}(x_1,...,x_n):=\int_o^{x_1} dx'_1 ...\int_o^{x_n} dx'_n W_{g,n}(x'_1,...,x'_n)$. It is very convenient to use those in the following, since the $\Phi_n$ are already present in Theorem \ref{Thm:Borot}. To see this let $I=\{1,...,n\}$, we write \eqref{cv} with the formal derivative \eqref{formaldif}
\begin{align}\nonumber
	c(u_I,x_I)=&\bigg(\prod_{i\in I}\frac{e^{\hbar u_i x_i\partial {x_i}/2}-e^{-\hbar u_i x_i\partial {x_i}/2}}{x_i\partial_{x_i}}\bigg)\bigg( W_{n}(x_I)\prod_{i\in I}x_i\bigg)\\\nonumber
	=&\bigg(\prod_{i\in I}\frac{1}{x_i\partial_{x_i}}\bigg)\bigg(W_{n}(e^{\hbar u_1/2}x_1,..,e^{\hbar u_n/2}x_n)e^{\hbar u_1/2}x_1...e^{\hbar u_n/2}x_n\\\nonumber
	&-W_{n}(e^{-\hbar u_1/2}x_1,e^{\hbar u_2/2}x_2,..,e^{\hbar u_n/2}x_n)e^{-\hbar u_1/2}x_1e^{\hbar u_2/2}x_2...e^{\hbar u_n/2}x_n\pm ...\\\nonumber
	&+(-1)^nW_{n}(e^{-\hbar u_1/2}x_1,..,(e^{-\hbar u_n/2}x_n)e^{-\hbar u_1/2}x_1...e^{-\hbar u_n/2}x_n\bigg)\\\label{Phi1}
	=&\Phi_{n}(e^{\hbar u_1/2}x_1,...,e^{\hbar u_n/2}x_n)+(-1)^1\Phi_{n}(e^{-\hbar u_1/2}x_1,e^{\hbar u_2/2}x_2...,e^{\hbar u_n/2}x_n)\pm....\\\nonumber
	&+(-1)^n \Phi_{n}(e^{-\hbar u_1/2}x_1,...,e^{-\hbar u_n/2}x_n)
\end{align} 
where for formal power series $(\partial_x)^{-1}=\int dx$ holds. This leads to $( x\partial_x)^{-1}f(x)=( \partial_x)^{-1}x^{-1}f(x)=\int dx \frac{f(x)}{x}$, since $( x\partial_x)^{-1}(x\partial_x)f(x)=(x\partial_x)( x\partial_x)^{-1}f(x)=f(x)$ for a formal power series $f(x)$. Interestingly, all arguments of  $\Phi_n$ are deformed by the factor  $e^{\pm\hbar u/2}$ and the result is a sum over all possible choices of deformed arguments $e^{\pm\hbar u_i/2}x_i$, which are $2^n$ terms. The sign $(-1)^k$ is given by the number of negatively deformed arguments, i.e. $k$-times $e^{-\hbar u_i/2}x_i$.

Now, it is crucial for the later proof by induction to extract the $(g,n)=(0,1)$-sector inside all operators $\tilde{O}$. We write $W^{(g>0)}_1(x):=W_1(x)-\frac{1}{\hbar}W_{0,1}(x)\in \mathcal{O}(\hbar)$. The previous computation gives in case $(g,n)=(0,1)$
\begin{align*}
	u S(\hbar u \partial_x) W_{0,1}(x(y)) x(y)=\frac{\Phi_{0,1}(e^{\hbar u/2}x(y))-\Phi_{0,1}(e^{-\hbar u/2}x(y))}{\hbar}
\end{align*}

so that the second line of the operator $\tilde{O}$ in \eqref{tildeO} takes the form
\begin{align}\label{Wg1}
	&\exp\bigg(\hbar uS(\hbar ux\partial_{x})\bigg(x(y)W^{(g>0)}_1(x(y))\bigg)\bigg)\sum_{j\geq0}\frac{\bigg(\frac{\Phi_{0,1}(e^{\hbar u/2}x(y))-\Phi_{0,1}(e^{-\hbar u/2}x(y))}{\hbar}-u x(y) y \bigg)^j}{j!}\\\nonumber
	=&\exp\bigg(\hbar uS(\hbar ux\partial_{x})\bigg(x(y)W^{(g>0)}_1(x(y))\bigg)\bigg)\sum_{j\geq0}( u x(y))^j\frac{\bigg(\frac{\Phi_{0,1}(e^{\hbar u/2}x(y))-\Phi_{0,1}(e^{-\hbar u/2}x(y))}{\hbar u x(y) }- y \bigg)^j}{j!}.
\end{align}
We have extracted $u^j$ to make the latter expression homogeneous in $\hbar u$.
The expansion of the function in the brackets of power $j$ starts with $(\hbar u)^2$
\begin{align}\label{PhiOu2}
	\bigg(\frac{\Phi_{0,1}(e^{\hbar u/2}x(y))-\Phi_{0,1}(e^{-\hbar u/2}x(y))}{\hbar u x(y) }- y\bigg)\in \mathcal{O}((\hbar u)^2),
\end{align}
since it is an even function in $\hbar u$ and the constant term vanishes due to $(\Phi_{0,1})'(x(y))=W_{0,1}(x(y))=y$ by definition \eqref{defPhi}. 

If we now combine Corollary \ref{cor:relation} and the expansion for the sector $(g,n)=(0,1)$ in \eqref{eq:cor-relation} with $\tilde{O}$ (for simplification in just one variable $x_i(y_i)$), we get
\begin{align}\nonumber
	&\sum_{k,j\geq0}(-\partial_y)^k[u^{k+1}]S(u\hbar)^{y\partial_y}\bigg(\frac{x'(y)}{S(u\hbar)^{k+1}x(y)^{k+1} \hbar }\bigg)\\\nonumber
	& \cdot\frac{( u x(y))^j\bigg(\frac{\Phi_{0,1}(e^{\hbar u/2}x(y))-\Phi_{0,1}(e^{-\hbar u/2}x(y))}{\hbar u x(y) }- y \bigg)^j}{j!} F(\hbar, e^{\pm \hbar u/2}x(y))\\\nonumber
	=&\sum_{k,j\geq0}(-\partial_y)^k[u^{k-j+1}]\bigg(\frac{x'(S(u\hbar)y)}{S(u\hbar)^{k-j+1}x(S(u\hbar)y)^{k-j+1} \hbar }\bigg)\\\nonumber
	&\cdot\frac{\bigg(\frac{\Phi_{0,1}(e^{\hbar u/2}x(S(u\hbar)y))-\Phi_{0,1}(e^{-\hbar u/2}x(S(u\hbar)y))}{\hbar u x(S(u\hbar)y )S(u\hbar) }- y \bigg)^j}{j!} F(\hbar, e^{\pm \hbar u/2}x(S(u\hbar)y))\\\label{Fhu}
	=&\sum_{k\geq0}(-\partial_y)^k[u^{k+1}]\sum_{j\geq 0}(-\partial_y)^j\bigg(\frac{x'(S(u\hbar)y)}{S(u\hbar)^{k+1}x(S(u\hbar)y)^{k+1} \hbar }\bigg)\\\nonumber
	&\cdot\frac{\bigg(\frac{\Phi_{0,1}(e^{\hbar u/2}x(S(u\hbar)y))-\Phi_{0,1}(e^{-\hbar u/2}x(S(u\hbar)y))}{\hbar u x(S(u\hbar)y )S(u\hbar) }- y \bigg)^j}{j!} F(\hbar,e^{\pm \hbar u/2} x(S(u\hbar)y)),
\end{align}
where $F(\hbar, e^{\pm \hbar u/2}x(y))\in\mathcal{O}(u^1)$ is a function independent of $j$ and $k$. More precisely, we have identified 
\begin{align}\label{FDef}
	F(\hbar, e^{\pm \hbar u/2}x(y))=\prod_{i} e^{\hbar u_iS(\hbar u_ix_i\partial_{x_i})\big(x_i(y_i)W^{(g>0)}_1(x_i(y_i))\big)}\sum_{\Gamma\in\G_n}\prod_{I\in\I(\Gamma)}c(u_I,x_I)
\end{align}
which consists of correlators coming from $c(u_I,x_I)$ and $W^{(g>0)}_1(x)$-terms, which are due to \eqref{Phi1} just of the form $\Phi_{n}(e^{\hbar u/2}x(y),...)-\Phi_{n}(e^{-\hbar u/2}x(y),...)$ and therefore just depend on $\hbar$ and $e^{\pm\hbar u/2}x(y)$. We have used in the first step the action of the formal derivative \eqref{formaldif} of $S(u\hbar)^{y\partial_y}$. 
As explained in \eqref{PhiOu2}, the term to power $j$ has vanishing constant
term. Furthermore, the contribution of the other factor is of $\mathcal{O}(u^1)$ because $F(\hbar, e^{\pm \hbar u/2}x(S(u\hbar)y))\in \mathcal{O}(u^1)$ and $\frac{x'(S(u\hbar)y)}{S(u\hbar)^{k+1}x(S(u\hbar)y)^{k+1} \hbar }\in \mathcal{O}(u^0)$. Therefore, terms in the sum for $k<j-1$ give a vanishing contribution, and we can shift $k\to k+j$.

To get rid of some new explicit factor of $y$, which would appear due to the expansion of $x(S(\hbar u )y)$, we will apply the following lemma:
\begin{lemma}\label{lem:tech}
	Let $\Phi_{0,1}(x)=\int_o^x dx'\,y(x')$, $f(x)$ some smooth function and $S(u)=\frac{e^{u/2}-e^{-u/2}}{u}$. Then, the following simplification holds as a formal power series in $u^{2}$
	\begin{align*}
		&\cosh(u/2)S(u)\sum_{j\geq0}(-\partial_y)^j\bigg[x'(S(u)y)
		f\big( x(S(u)y)\big)\frac{\bigg(\frac{\Phi_{0,1}(e^{u/2} x(S(u) y))-\Phi_{0,1}(e^{-u/2} x(S(u) y))}{u S(u)x(S(u)y)}-y\bigg)^j}{j!}\bigg]\\
		=&\sum_{j\geq0}(-\partial_y)^j\bigg[x'(y)
		f\bigg(\frac{x(y)}{\cosh(u/2)}\bigg)\frac{\bigg(\frac{\Phi_{0,1}(\frac{e^{u/2} x( y)}{\cosh(u/2)})-\Phi_{0,1}(\frac{e^{-u/2} x( y)}{\cosh(u/2)})}{u S(u)\frac{x(y)}{\cosh(u/2)}}-y\bigg)^j}{j!}\bigg].
	\end{align*} 
	This is a simplification in the sense that the rhs does not contain any explicit powers of $y$, whereas the lhs does through the expansion of $x(S(u)y)$.
	\begin{proof}
		See Appendix \ref{AppA}.
	\end{proof}
\end{lemma}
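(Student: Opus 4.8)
The plan is to treat both sides as formal power series in $u^2$ and reduce the identity to a change of variables that simultaneously rescales $y$ and $x$. The key observation is that the combination $x(S(u)y)$ on the left-hand side, viewed as a function of $y$, is obtained from $x(y)$ by the formal operator $S(u)^{y\partial_y}$, and $S(u)^{y\partial_y}$ conjugates $\partial_y$ and $(-y\partial_y)$-type expressions in a controlled way (exactly as used in \eqref{movey} and in the proof of Lemma \ref{lem:Delta}). So the first step is to write the left-hand side as $\cosh(u/2)S(u)\cdot S(u)^{y\partial_y}$ applied to the $j$-sum evaluated at the ``undeformed'' argument, i.e.\ to pull the rescaling $y\mapsto S(u)y$ out of every occurrence of $x(S(u)y)$ as a single formal operator acting on the whole bracket.

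Next I would commute $S(u)^{y\partial_y}$ through the outer derivatives $(-\partial_y)^j$. Since $S(u)^{y\partial_y}\partial_y = S(u)^{-1}\partial_y\, S(u)^{y\partial_y}$, each of the $j$ derivatives picks up a factor $S(u)^{-1}$, and the explicit Jacobian factor $x'(S(u)y)$ together with the $\frac{1}{S(u)x(S(u)y)}$ inside the inner bracket must be tracked; the point is that after conjugation the inner bracket
\[
\frac{\Phi_{0,1}(e^{u/2}x(S(u)y))-\Phi_{0,1}(e^{-u/2}x(S(u)y))}{uS(u)x(S(u)y)}-y
\]
becomes, under $S(u)^{y\partial_y}$, an expression in which the argument of $x$ is simply $y$, but with $e^{\pm u/2}$ replaced by $e^{\pm u/2}/\cosh(u/2)$ after absorbing the leftover scalar $S(u)\cosh(u/2) = \tfrac{1}{2}(e^{u}-e^{-u})/u \cdot$ (the $\cosh$ factor) — concretely $S(u)\cosh(u/2)$ is exactly what rescales the deformation factors so that $e^{u/2}x(S(u)y)\mapsto \tfrac{e^{u/2}}{\cosh(u/2)}x(y)$. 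The identity $S(u)\cosh(u/2)=\tfrac{e^{u/2}+e^{-u/2}}{2}\cdot\tfrac{e^{u/2}-e^{-u/2}}{u}\cdot\tfrac{1}{\cosh(u/2)}$-type bookkeeping is what makes the prefactor $\cosh(u/2)S(u)$ on the left precisely cancel against the rescaling of $x$, leaving $f(x(y)/\cosh(u/2))$ and the matching deformed $\Phi_{0,1}$ arguments on the right.

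The main obstacle I expect is bookkeeping the interaction between the $j$ outer derivatives $(-\partial_y)^j$ and the $y$-dependence that is distributed across three places in the summand — the Jacobian $x'$, the function $f$, and the inner $\Phi_{0,1}$-bracket (which itself contains $-y$ explicitly as well as $y$ inside $x(S(u)y)$). One must check that after conjugation by $S(u)^{y\partial_y}$ the explicit additive $-y$ term is unchanged (it is, since $S(u)^{y\partial_y}y = S(u)y$ and the compensating $S(u)^{-1}$ from a derivative or from the $\tfrac{1}{S(u)}$ in the denominator restores it), so that the structure of the $j$-th power is preserved term by term. A secondary subtlety is that this is an identity of formal power series in $u^2$ and not a convergent one, so $S(u)^{y\partial_y}$ must be interpreted order by order; since $S(u)=1+O(u^2)$, the operator $S(u)^{y\partial_y}=\exp(\log S(u)\,y\partial_y)$ is a well-defined formal series whose coefficients are differential operators, and each coefficient of $u^{2n}$ on both sides is a finite sum, so all manipulations are legitimate. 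I would relegate the explicit verification that every coefficient matches to the appendix, as the statement already indicates.
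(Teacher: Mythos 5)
Your first step is fine and is essentially a manipulation the paper already performs in Section \ref{Sec.x}: writing $x(S(u)y)=S(u)^{y\partial_y}x(y)$, noting that the undeformed $-y$ together with the extra $S(u)$ in the denominator is $S(u)^{-1}$ times the fully deformed bracket, and using $(\partial_y)^jS(u)^{y\partial_y}=S(u)^jS(u)^{y\partial_y}(\partial_y)^j$ so that the $S(u)^{-j}$ cancels. This cleanly rewrites the left-hand side as $\cosh(u/2)S(u)\,T(S(u)y)$, where $T$ is the $j$-sum with all arguments undeformed. But that is only a reformulation; the actual content of the lemma is the remaining identity $\cosh(u/2)S(u)\,T(S(u)y)=\mathrm{RHS}$, and your proposal disposes of it with the assertion that the scalar prefactor ``precisely cancels against the rescaling of $x$, leaving $f(x(y)/\cosh(u/2))$.'' That assertion is not an operator identity and is false term by term in $j$: already for $j=0$ it would claim $\cosh(u/2)S(u)\,x'(S(u)y)f(x(S(u)y))=x'(y)f(x(y)/\cosh(u/2))$, which fails for generic $x$ (the left side involves $x''$ at order $u^2$, the right side does not). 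The identity only holds after summing over $j$, because different $j$'s mix at each order of $u^2$; e.g.\ for $x(y)=y$, $f\equiv 1$ the $j=0$ and $j=1$ terms of the left side must be combined to reproduce the constant $1$ on the right. Moreover, the mechanism of that mixing uses essentially that $\Phi_{0,1}'$ is the inverse function $y(x)$ of $x(y)$ (this is what makes the bracket $\mathcal{O}(u^2)$ and ties the $u$-derivative of the bracket to $\partial_y$); your argument never invokes this hypothesis, and without it the lemma is false, so no amount of pure $S(u)^{y\partial_y}$ bookkeeping can close the gap.

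For comparison, the paper's proof in Appendix \ref{AppA} is an induction on the order $u^{2n}$ (where the $j$-sum truncates at $j\le n$): one differentiates both sides in $u$, converts the various $\partial_t$-derivatives into $\partial_y$-derivatives via substitutions such as $\partial_t=\frac{S'(u)}{S(u)}y\partial_y$ and $\partial_t=-\frac{x(S(u)y)}{S(u)x'(S(u)y)}\frac{\cosh(u/2)'}{\cosh(u/2)}\partial_y$, absorbs the derivative of the inner bracket into a modified function $\tilde f$ with its own $u$-expansion, and checks that the leftover terms cancel so that both sides reduce to the induction hypothesis applied to $\tilde f$. Some step of this kind --- one that genuinely exploits the sum over $j$ and the relation $\Phi_{0,1}'=y$ --- is what your proposal is missing.
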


Using Lemma \ref{lem:tech} in \eqref{Fhu} after substituting $u\hbar\to u$ gives
\begin{align}\label{ZwSchritt1}
	=&\sum_{k\geq0}(-\partial_y)^k\hbar^k[u^{k+1}]\sum_{j\geq 0}(-\partial_y)^j\bigg(\frac{\cosh(u/2)^{k}x'(y)}{S(u)^{k+2}x(y)^{k+1} }\bigg)\\\nonumber
	&\cdot\frac{\bigg(\frac{\Phi_{0,1}(\frac{e^{ u/2}x(y)}{\cosh(u/2)})-\Phi_{0,1}(\frac{e^{- u/2}x(y)}{\cosh(u/2)})}{u \frac{x(y)}{\cosh(u/2)}S(u) }- y \bigg)^j}{j!} F\bigg(\hbar,  e^{\pm u/2}\frac{x(y)}{\cosh(u/2)}\bigg).
\end{align}
This seems to make the computation just worse, but each of the $\Phi_n$ (also appearing in $F$) has now the argument
\begin{align*}
	\frac{e^{\pm u/2}x(y)}{\cosh(u/2)}=\frac{2x(y)e^{\pm u/2}}{e^{u/2}+e^{-u/2}}=x(y)\pm \frac{e^{u/2}-e^{-u/2}}{e^{u/2}+e^{-u/2}}x(y).
\end{align*}
Substituting $2\frac{e^{u/2}-e^{-u/2}}{e^{u/2}+e^{-u/2}}x(y)=u\frac{S(u)}{\cosh(u/2)}x(y)=z$, we can perfectly apply the Lagrange-Bürmann formula\footnote{\textbf{Lagrange-Bürmann inversion theorem}: Let $f(w),H(w)$ be smooth functions with $f(0)=0$ and $f(w)=\frac{w}{\phi(w)}$ with $\phi(0)\neq 0$. For $z=f(w)$, let $g(z)=w$ be its formal inverse. Then the following holds
		$[z^n]H(g(z))=\frac{1}{n}[w^{n-1}](H'(w)\phi(w)^n).$
	Since $[z^n]H(g(z))=\frac{1}{n}[z^{n-1}]\partial_z H(g(z))$, it holds equivalently
	\begin{align}\label{LBEq}
		[z^{n-1}](H'(g(z))g'(z))=[w^{n-1}](H'(w)\phi(w)^n).
\end{align}}
\eqref{LBEq} for any $j$ and $k$. We set $z=f(u)=u\frac{S(u)}{\cosh(u/2)}x(y)$ that gives $\phi(u)=\frac{\cosh(u/2)}{S(u)x(y)}$ and $u=g(z)=\log\big(\frac{x(y)+\frac{z}{2}}{x(y)-\frac{z}{2}}\big)$ and find for all $k,j$
\begin{align}\nonumber
	&[u^{k+1}]\bigg(\frac{\cosh(u/2)^{k}x'(y)}{S(u)^{k+2}x(y)^{k+1} }\bigg)
	\frac{\bigg(\frac{\Phi_{0,1}(\frac{e^{ u/2}x(y)}{\cosh(u/2)})-\Phi_{0,1}(\frac{e^{- u/2}x(y)}{\cosh(u/2)})}{u \frac{x(y)}{\cosh(u/2)}S(u) }- y \bigg)^j}{j!} F\bigg(\hbar, e^{\pm u/2}\frac{x(y)}{\cosh(u/2)}\bigg)\\\nonumber
	=&x'(y)[u^{k+1}]\bigg(\frac{\phi(u)^{k+2} x(y)}{\cosh(u/2)^2}\bigg)\frac{\bigg(\frac{\Phi_{0,1}(x(y)+\frac{f(u)}{2})-\Phi_{0,1}(x(y)-\frac{f(u)}{2})}{f(u) }- y \bigg)^j}{j!} F\bigg(\hbar,  x(y)\pm\frac{f(u)}{2}\bigg)\\\label{ZwSchritt}
	=&x'(y)[z^{k+1}]\frac{\bigg(\frac{\Phi_{0,1}(x(y)+\frac{z}{2})-\Phi_{0,1}(x(y)-\frac{z}{2})}{z }- y \bigg)^j}{j!} F\bigg(\hbar, x(y)\pm\frac{z}{2}\bigg),
\end{align}
where we have used $g'(z)=\frac{x(y)}{(x(y)+\frac{z}{2})(x(y)-\frac{z}{2})}=\frac{\cosh(u/2)^2}{x(y)}$.

Now, we are ready to conclude the main Theorem, the simplification of Theorem \ref{Thm:Borot}:
\begin{theorem}\label{Thm:main}
	Let $S(x)=\frac{e^{x/2}-e^{-x/2}}{x}=\sum_{n=0}^\infty\frac{x^{2n}}{2^{2n}(2n+1)!}$ and for $I=\{i_1,...,i_n\}$
	\begin{align}\label{ccv}
		\hat{c}(u_I,x_I):=&\bigg(\prod_{i\in I}\hbar u_i S(\hbar u_i\partial_{x_i})\bigg)\big( W_{n}(x_I)\big)
	\end{align}
	and for $I=\{j,j\}$ the special case 
	\begin{align}\label{special}
		\hat{c}(u_{I},x_I):=(\hbar u_j S(\hbar u_j\partial_{x_j}))(\hbar u_j S(\hbar u_j\partial_{x}))\bigg(W_{2}(x_j,x)-\frac{1}{(x_j-x)^2}\bigg)\bigg\vert_{x=x_j}.
	\end{align}
	Let further be 
	\begin{align}\label{OO-Operator}
		\hat{O} (x_i(y_i)):=&\sum_{m\geq0} \big(-\partial_{y_i}\big)^m\big(-x'_i(y_i)\big)
		 [u_i^m]\frac{\exp\bigg(\hbar u_iS(\hbar u_i\partial_{x_i(y_i)})W_1(x_i(y_i))-y_iu_i\bigg)}{\hbar u_i}\\\nonumber
		 =&\sum_{m\geq0} \big(-\partial_{y_i}\big)^m\big(-x'_i(y_i)\big)
		 [u_i^m]\frac{\exp\bigg(\Phi_1\big(x_i(y_i)+\frac{\hbar u_i}{2}\big)-\Phi_1\big(x_i(y_i)-\frac{\hbar u_i}{2}\big)-y_iu_i\bigg)}{\hbar u_i},
	\end{align}
	where $\Phi_1(x):=\int_{o}^{x} dx' W_{1}(x')$.
	
	Then, we have for $2g+n-2>0$ the functional relation
	\begin{align}\label{ThmEq}
		W^{\vee}_{g,n}(y_1,...,y_n) =[\hbar^{2g-2+n}]\sum_{\Gamma\in\G_n}\frac{1}{|\mathrm{Aut}(\Gamma)|}\prod_{i=1}^n\hat{O} (x_i)\prod_{I\in\I(\Gamma)}\hat{c}(u_I,x_I),
	\end{align}
	where the graphs $\mathcal{G}_n$ are defined in Definition \ref{def:graph}.
	\begin{proof}
		Now, we have to bring all bricks together to prove the main theorem. First,
		we insert the computation carried out in \eqref{ZwSchritt} with the Lagrange-Bürmann inversion into \eqref{ZwSchritt1} and substitute $z\to\hbar u_i$. Second, we invert the computational steps carried out in \eqref{Fhu} (for each variable $x_i(y_i)$)
		\begin{align*}
			&\sum_{k,j\geq0}(-\partial_{y_i})^{j+k}
			x_i'(y_i)[u_i^{k}]\frac{\bigg(\frac{\Phi_{0,1}(x_i(y_i)+\frac{\hbar u_i}{2})-\Phi_{0,1}(x_i(y_i)-\frac{\hbar u_i}{2})}{\hbar u_i }- y_i \bigg)^j}{j!\hbar u_i} F\bigg(\hbar, x_i(y_i)\pm\frac{\hbar u_i}{2}\bigg)\\
			=&\sum_{k,j\geq0}(-\partial_{y_i})^{j+k}
			x'_i(y_i)[u_i^{k+j}]\frac{\bigg(\frac{\Phi_{0,1}(x_i(y_i)+\frac{\hbar u_i}{2})-\Phi_{0,1}(x_i(y_i)-\frac{\hbar u_i}{2})}{\hbar  }- y_iu_i \bigg)^j}{j!\hbar u_i} F\bigg(\hbar, x_i(y_i)\pm\frac{\hbar u_i}{2}\bigg)\\
			=&\sum_{k\geq0}(-\partial_{y_i})^{k}
			x_i'(y_i)[u_i^{k}]\frac{\exp\bigg(\frac{\Phi_{0,1}(x_i(y_i)+\frac{\hbar u_i}{2})-\Phi_{0,1}(x_i(y_i)-\frac{\hbar u_i}{2})}{\hbar  }- y_iu_i \bigg)}{\hbar u_i} F\bigg(\hbar, x_i(y_i)\pm\frac{\hbar u_i}{2}\bigg).
		\end{align*}
		The series is shifted $k+j\to k$ such that the series over $j$ becomes the exponential function. From the definition \eqref{FDef} of $F(\hbar,..)$, we are adding the terms with correlators $W^{(g>1)}_1$ to the exponential.
		The separation of \eqref{Wg1} is undone, which gives us exactly 
		$\hat{O}(x_i(y_i))$ as defined in \eqref{OO-Operator} of the theorem.
		This is done for all $x_i(y_i)$. 
		
		The remaining part of $F(\hbar,...)$ via definition \eqref{FDef} yields the sum over all graphs of $\G_n$ where the weights are changed. The new form of $F\bigg(\hbar, x(y)\pm\frac{\hbar u}{2}\bigg)$ instead of $F(\hbar, e^{\pm \hbar u/2}x(y))$ replaces for the weights $S(\hbar u x\partial_x)$ in $c(u_I,x_I)$ in \eqref{cv} of Theorem \ref{Thm:Borot} by $S(\hbar u \partial_x)$ in $\hat{c}(u_I,x_I)$ in \eqref{ccv} as defined in Theorem \ref{Thm:main}. This replacement happens due to the two different formal derivatives $e^{a\partial_x}f(x)=f(x+a)$ appearing in $S(\hbar u\partial_x)$  and $e^{ax\partial_x}f(x)=f(xe^a)$ appearing in $S(\hbar ux\partial_x)$, where recall $S(x)=\frac{e^{x/2}-e^{-x/2}}{x}$.  
	\end{proof}
\end{theorem}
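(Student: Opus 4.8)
\emph{Proof strategy.} The plan is to take Corollary~\ref{cor:relation} as the starting point, since it has already performed the hard structural reduction: the unwieldy operator $\vec O$ of Theorem~\ref{Thm:Borot} has been replaced by $\tilde O$ in \eqref{tildeO} via Lemma~\ref{lem:simpleO}, the global prefactor $\prod_i y_i$ on the left of \eqref{BorotThmEq} has been cancelled, and the exceptional $\Delta_g$-term has been matched against the $\delta$-contribution of $\vec O$ using Lemma~\ref{lem:Delta}. What is left is to strip the residual explicit factors of $x_i(y_i)$ and $y_i$ out of $\tilde O$ and the weights, and to repackage the outcome into the clean operators $\hat O$ of \eqref{OO-Operator} and the weights $\hat c$ of \eqref{ccv}--\eqref{special}.

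First I would rewrite the combinatorial weights. Using $\hbar u\,S(\hbar u x\partial_x)=(x\partial_x)^{-1}(e^{\hbar u x\partial_x/2}-e^{-\hbar u x\partial_x/2})$ together with the formal rule $(\partial_x)^{-1}=\int dx$, the weight $c(u_I,x_I)$ of \eqref{cv} becomes the signed sum over all $2^{|I|}$ sign patterns of the primitives $\Phi_{|I|}$ evaluated at the scaled arguments $e^{\pm\hbar u_i/2}x_i$, as in the display preceding \eqref{PhiOu2}. Next I would isolate the genus-zero one-point contribution inside $\tilde O$ by writing $W_1=\hbar^{-1}W_{0,1}+W_1^{(g>0)}$ with $W_1^{(g>0)}\in\mathcal O(\hbar)$; the second line of \eqref{tildeO} then factorises as $\exp(\hbar u S(\hbar u x\partial_x)(x(y)W_1^{(g>0)}(x(y))))$ times an exponentiated $j$-sum in the bracket $\frac{\Phi_{0,1}(e^{\hbar u/2}x(y))-\Phi_{0,1}(e^{-\hbar u/2}x(y))}{\hbar u\,x(y)}-y$, and the first factor together with the $c$-weights gets absorbed into the auxiliary function $F$ of the computation. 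The elementary but crucial input here is \eqref{PhiOu2}: the bracket is even in $\hbar u$ and its leading term vanishes because $\Phi_{0,1}'(x(y))=y$, so it lies in $\mathcal O((\hbar u)^2)$; consequently everything stays a genuine formal power series in $\hbar$ with no negative powers of $u$.

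The core of the argument is the chain \eqref{Fhu}--\eqref{ZwSchritt}. I would first let the formal shift operator $S(u\hbar)^{y\partial_y}$ act, that is, replace $y\mapsto S(u\hbar)y$ inside every occurrence of $x(y)$; then shift the outer index $k\mapsto k+j$ so as to absorb the factors $(u\,x(y))^j$, which is legitimate since the lowest surviving power of $u$ forces $k\ge j-1$. Applying Lemma~\ref{lem:tech} (after renaming $u\hbar\to u$) eliminates the explicit powers of $y$ that the Taylor expansion of $x(S(u\hbar)y)$ would otherwise create, at the cost of rescaling every argument of $\Phi$ by $1/\cosh(u/2)$; those arguments then become $x(y)\pm\frac{e^{u/2}-e^{-u/2}}{e^{u/2}+e^{-u/2}}x(y)$. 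Setting $z=u\,\frac{S(u)}{\cosh(u/2)}x(y)=:f(u)$, which satisfies $f(0)=0$ and $f'(0)\ne0$, the Lagrange--Bürmann inversion \eqref{LBEq} converts each $[u^{k+1}]$-extraction into a $[z^{k+1}]$-extraction; in doing so the prefactors $\phi(u)^{k+2}$, with $\phi(u)=\cosh(u/2)/(S(u)x(y))$, telescope away and only $x'(y)$ is left in front. Finally I would substitute $z\to\hbar u_i$ and undo the index shift $k+j\to k$, so that the $j$-sum resums into an exponential which, together with the reinstated $W_1^{(g>0)}$-terms carried in $F$, reconstitutes $\exp(\Phi_1(x(y)+\tfrac{\hbar u}{2})-\Phi_1(x(y)-\tfrac{\hbar u}{2})-y u)/(\hbar u)$, i.e. exactly $\hat O(x(y))$; at the same time the arguments $x(y)\pm\tfrac{\hbar u}{2}$ inside $F$ turn the scaled weights $c(u_I,x_I)$ into the additive-shift weights $\hat c(u_I,x_I)$ through $e^{a\partial_x}f(x)=f(x+a)$, and the coinciding-variable weight \eqref{specialc} passes to \eqref{special} in the same way. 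Performing this for all variables $x_i(y_i)$ produces \eqref{ThmEq}.

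The step I expect to be the main obstacle is the bookkeeping around the Lagrange--Bürmann substitution: one must check that \eqref{LBEq} applies uniformly in both indices $k$ and $j$, that the change of variable $z=f(u)$ is compatible with the index shift carried out beforehand and with the rescaling produced by Lemma~\ref{lem:tech}, and that after all substitutions the result is genuinely a formal power series in $\hbar$ in the stable range $2g+n-2>0$, with the $\delta/u$-type term of \eqref{Osim}---which for $n=1$ cancels exactly against $\Delta_g$ by Lemma~\ref{lem:Delta}---correctly accounted for. Once the order of operations is fixed (act with $S(u\hbar)^{y\partial_y}$, shift $k$, apply Lemma~\ref{lem:tech}, apply Lagrange--Bürmann, substitute $z\to\hbar u$, reverse the shift, reinstate the higher one-point terms), everything else is routine.
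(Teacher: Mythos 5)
Your proposal is correct and follows essentially the same route as the paper: starting from Corollary \ref{cor:relation}, rewriting the weights via the primitives $\Phi_n$, isolating the $(0,1)$-sector with the key observation \eqref{PhiOu2}, running the chain \eqref{Fhu}--\eqref{ZwSchritt} through Lemma \ref{lem:tech} and the Lagrange--Bürmann inversion, and then substituting $z\to\hbar u_i$ and resumming to reconstitute $\hat O$ and $\hat c$. The order of operations you fix at the end matches the paper's proof exactly, so no further comment is needed.
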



Equation  \eqref{ThmEq} can be extended to the two cases with $2g+n-2\leq 0$ by replacing in \eqref{ThmEq} the sum starting at $m=-1$, because 
\begin{align*}
	W^\vee_{0,1}(y)=(-\partial_y)^{-1}(-x'(y))=x(y)
\end{align*}
gives the expected result.

The great benefit of the Theorem is that it gives the most compact formula for the functional relation between $W^{\vee}_{g,n}(y_1,...,y_n)$ and $W_{g,n}(x_1,...,x_{n})$, which are separately computed by TR with spectral curve $(y,x)$ and $(x,y)$, respectively. The number of terms appearing in the functional relation \eqref{ThmEq} of Theorem \ref{Thm:main} is reduced tremendously in comparison to \eqref{BorotThmEq} of Theorem \ref{Thm:Borot}. 

Furthermore, representing the functional relation on the $z$-plane, which means substituting (with some lack of notation) $y_i=y_i(z_i)$ and $x_i(y_i)=x_i(z_i)$ shows immediately that $W_{g,n}(y_1(z_1),...,y_n(z_n))$ can not have any poles at the zeros of $x_i(z_i)$ neither at the zeros of $y_i(z_i)$, which would first be implied by the functional relation of Theorem \ref{Thm:Borot}. The cancellation of possible poles coming from factors of $x_i$ and $y_i$ was not proven before.
The derivative in the definition of $\hat{O}$ \eqref{OO-Operator} wrt to $y_i$ would just generate additional poles at the ramification points of $y(z)$, which is indeed needed.

\subsection{Examples}\label{Sec.Ex}
To show how the formula of Theorem \ref{Thm:main} works and how simple the functional relation of Theorem \ref{Thm:main} is, we want to give some examples.

First, we add some information about the different orders of $\hbar$ contributing from different graphs. The set of graphs $\G_n$ is an infinite set, but as stated already in \cite{Borot:2021thu} just a finite number of graphs contribute for a fixed genus. This finite set is restricted by the number of loops $L$ (first Betti number), which has to be smaller or equal to the genus $L\leq g$. From the weights $\hat{c}$ of \eqref{ccv}, the smallest order in $\hbar$ is $\prod_{I\in \mathcal{I}(\Gamma)}\hbar^{2|I|-2}$. This happens because at leading order we have $\hbar^{|I|}$ coming from $\prod_{i\in I}\hbar u_i S(\hbar u_i\partial_{x_i})$ and $\hbar^{|I|-2}$ from $W_{|I|}(x_I)=\sum_{g}\hbar^{2g+|I|-2}W_{g,|I|}(x_I)$ for each weight $\hat{c}(u_I,x_I)$. Note that for a connected graph $L=1+E-V$ holds, where $E$ is the number of edges and $V$ the number of vertices. The number of edges for a graph $\Gamma\in \G_n$ defined by Definition \ref{def:graph} is $E=\sum_{I\in \I(\Gamma)}|I|$ and the number of vertices $V=n+|\I|$. Therefore, the number of loops is equal to $L=1+\sum_{I\in \I(\Gamma)}|I|-n+|\I|=1+\sum_{I\in \I(\Gamma)}(|I|-1)-n$. Inserting this into the leading order in $\hbar$ for a given graph $\Gamma\in \G_n$, we get 
\begin{align*}
	\prod_{I\in \mathcal{I}(\Gamma)}\hbar^{2|I|-2}=\hbar^{2\sum_{I\in \I(\Gamma)}(|I|-1)}=\hbar^{2L-2+2n}.
\end{align*}
The denominator of \eqref{OO-Operator} divides by $\hbar^n$ which gives in total a leading order of $\hbar^{2L-2+n}$. This proves that for each loop we are getting an additional factor of $\hbar^2$. Furthermore, the genus is an upper bound for the number of possible loops we have to consider for the graphs $\Gamma\in \G_n$.

\subsubsection{$(g,n)=(0,n)$}
For the special case of genus $g=0$, the functional relation simplifies even further. We have to take the order $[\hbar^{n-2}]$ in \eqref{ThmEq}. 
Due to the previous discussion, we have just to consider trees (no loops). 
We find that the leading contribution is given by trees. The weights $\hat{c}$ with just the genus $g=0$ contribution already provide the factor $\hbar^{n-2}$. Note further that the term of $\hbar^{-1}$ in the expansion of $W_1$ cancels by the term $-y_iu_i$ in the exponential.

Thereforem, all formal power series $S(\hbar u \partial_{x})$ are just expanded as 1. The operator $\hat{O}$ is therefore given for genus $g=0$ by
\begin{align*}
	\hat{O}(x_i(y_i))=\sum_{m\geq0} \big(-\partial_{y_i}\big)^m\big(-x'_i(y_i)\big)
	[u_i^m]\frac{1}{\hbar u_i}
\end{align*}
acting from the left on the weights $\hat{c}(u_I,x_I)$. Note that also within $\hat{c}(u_I,x_I)$ we have to take $1$ for $S(\hbar u \partial_{x})$. So we get 
\begin{align*}
	W^\vee_{0,n}(y_1,...,y_n) =&[\hbar^{n-2}]\sum_{\Gamma\in\G_n}\frac{1}{|\mathrm{Aut}(\Gamma)|}\prod_{i=1}^n\bigg(\sum_{m\geq0} \big(-\partial_{y_i}\big)^m\big(-x'_i(y_i)\big)
	[u_i^m]\frac{1}{\hbar u_i}\bigg)\\
	&\times \prod_{I\in \mathcal{I}(\Gamma)} \bigg(\prod_{i\in I}(\hbar u_i)\bigg)W_{0,|I|}(x_I)\hbar^{|I|-2}\\
	=&\sum_{\Gamma\in\G_n}\frac{1}{|\mathrm{Aut}(\Gamma)|}\prod_{i=1}^n \big(-\partial_{y_i}\big)^{r_i(\Gamma)-1}\big(-x'_i(y_i)\big)\prod_{I\in \mathcal{I}(\Gamma)} W_{0,|I|}(x_I),
\end{align*}
where $r_i(\Gamma)$ is the valence of the $i^{\text{th}}$ $\bigcirc$-vertex and we have to consider just the trees in $\G_n$ as mentioned above. We have used that for any tree $\Gamma\in\G_n$ the following holds $\prod_{I\in \I(\Gamma)}\hbar^{2|I|-2}=\hbar^{2n-2}$ and $\prod_{I\in \I(\Gamma)}\prod_{i\in I}u_i=\prod_{i=1}^nu_i^{r_i(\Gamma)}$. This result coincides with the already known simplification for $g=0$ derived in \cite{Hock:2022wer}.

\subsubsection{$(g,n)=(1,1)$}
As discussed before, each loop of a graph gives an additional factor of $\hbar^2$.
For genus $g=1$, we have two different graphs, the graph consisting of just one $\bigcirc$ without any edges (tree), and the graph consisting of one $\bigcirc$ of valence two connected to one $\bullet$-vertex of valence two giving one loop. All other graphs in $\mathcal{G}_1$ have more than one loop.

For the first graph, we have $\hat{c}(u_I,x_I)=1$ and expand the exponential of $\hat{O}(x(y))$ in \eqref{OO-Operator}, where we have to take into account the linear term of the expansion of the exponential. This gives for the operator at order $[\hbar^1]$
\begin{align*}
	[\hbar^1]\hat{O} (x(y))=&[\hbar^1]\sum_{m\geq0} \big(-\partial_{y}\big)^m\big(-x'(y)\big)
	[u^m]\frac{\hbar uS(\hbar u\partial_{x(y)})W_{1}(x(y))-yu}{\hbar u}\\
	=&-x'(y)W_{1,1}(x(y))-\partial^2_y\bigg(x'(y)\frac{\partial^2_{x(y)}y}{24}\bigg)\\
	=&-x'(y)W_{1,1}(x(y))-\frac{1}{24}\partial^3_y\frac{1}{x'(y)},
\end{align*}
where we have used $S(\hbar u \partial_{x})=1+\frac{\hbar^2u^2 \partial_x^2}{24}+\mathcal{O}(\hbar^4)$ and $W_{1}(x(y))=\frac{W_{0,1}(x(y))}{\hbar}+\hbar W_{1,1}(x(y))+\mathcal{O}(\hbar^3)=\frac{y}{\hbar}+\hbar W_{1,1}(x(y))+\mathcal{O}(\hbar^3)$.

For the second graph, there are two automorphisms $|\mathrm{Aut}(\Gamma)|=2$. We have to expand again $S(\hbar u \partial_{x})=1+\mathcal{O}(\hbar^2)$ just to leading order and get 
\begin{align*}
	&[\hbar^1]\hat{O} (x(y))\hbar^2u^2\hat{W}_{0,2}(x(y),x(y))
	=\partial_y x'(y)\hat{W}_{0,2}(x(y),x(y)),
\end{align*}
where $\hat{W}_{0,2}(x_1,x_2):=W_{0,2}(x_1,x_2)-\frac{1}{(x_1-x_2)^2}$ is the regularised version, coming from the special case \eqref{special} with well-defined diagonal. Putting al together we have
\begin{align}\label{W11}
	W^\vee_{1,1}(y)=-x'(y)W_{1,1}(x(y))-\frac{1}{24}\partial^3_y\bigg(\frac{1}{x'(y)}\bigg) +\frac{1}{2}\partial_y \big[x'(y)\hat{W}_{0,2}(x(y),x(y))\big] ,
\end{align}
which is equivalent to \cite[Lemma C.1]{Eynard:2007kz} with trivial Bergman projective connection (since we have a genus $g=0$ spectral curve) or \cite[eq. (6.1)]{Hock:2022wer}. In \cite[eq. (28)]{Borot:2021thu}, this example was also computed from Theorem \ref{Thm:Borot} and consisted of seven terms, which were after tedious computation simplified to \eqref{W11}.

\subsubsection{$(g,n)=(1,2)$}\label{exgn12}
As explained above, at genus one a loop gives the additional factor of $\hbar^2$, which restricts us to all graphs in $\mathcal{G}_2$ with at most one loop. To compute the functional relation for the $(g,n)=(1,2)$ case, we have to consider six different graphs, which are up to permutations of the $\bigcirc$-vertices given in Fig.\ref{fig:gn12}.
\begin{figure}[h]
	\scalebox{1}{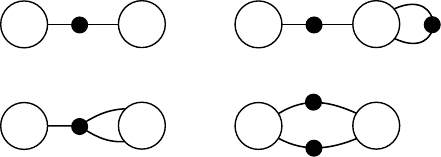}
	\caption{These are the graphs (up to possible permutation of the $\bigcirc$-vertices after label them), which contribute to $(g,n)=(1,2)$. The upper left graph has $|\mathrm{Aut}(\Gamma)|=1$, whereas all the other graphs have $|\mathrm{Aut}(\Gamma)|=2$.}
	\label{fig:gn12}
\end{figure}
The upper left graph with trivial automorphism gives due to the formula \eqref{ThmEq} the following contribution (where we have expanded already $O$ to the second order)
\begin{align}\nonumber
	[\hbar^2]&\sum_{m_1,m_2\geq 0}(-\partial_{y_1})^{m_1}(-\partial_{y_2})^{m_2}x'_1(y_1)x'_2(y_2)[u_1^{m_1}][u_2^{m_2}] \\\nonumber
	&\times\frac{1+u_1\hbar^2W_{1,1}(x_1(y_1)) +u_2\hbar^2W_{1,1}(x_2(y_2))+\frac{u_1^3\hbar^2}{24}\partial^2_{x_1(y_1)}y_1 +\frac{u_2^3\hbar^2}{24}\partial^2_{x_2(y_2)}y_2}{u_1u_2\hbar^2}  \\\nonumber
	&\times u_1u_2\hbar^2 \bigg(1+\frac{\hbar^2u_1^2\partial^2_{x_1(y_1)}+\hbar^2u_2^2\partial^2_{x_2(y_2)}}{24}\bigg) \big(W_{0,2}(x_1(y_1),x_2(y_2))+\hbar^2 W_{1,2}(x_1(y_1),x_2(y_2))\big)\\\nonumber
	=&x'_1(y_1)x'_2(y_2)W_{1,2}(x_1(y_1),x_2(y_2))\\\nonumber
	&-\partial_{y_1}\bigg(x'_1(y_1)x'_2(y_2)W_{1,1}(x_1(y_1))W_{0,2}(x_1(y_1),x_2(y_2))\bigg)\\\nonumber
	&-\partial_{y_2}\bigg(x'_1(y_1)x'_2(y_2)W_{1,1}(x_2(y_2))W_{0,2}(x_1(y_1),x_2(y_2))\bigg)\\\label{W121}
	&-\frac{1}{24}\partial^3_{y_1}\bigg(x'_1(y_1)x'_2(y_2) \big(\partial^2_{x_1(y_1)}y_1\big) W_{0,2}(x_1(y_1),x_2(y_2))\bigg)\\\nonumber
	&-\frac{1}{24}\partial^3_{y_2}\bigg(x'_1(y_1)x'_2(y_2) \big(\partial^2_{x_2(y_2)}y_2\big) W_{0,2}(x_1(y_1),x_2(y_2))\bigg)\\\nonumber
	&+\frac{1}{24}\partial^2_{y_1}\bigg(x'_1(y_1)x'_2(y_2) \partial^2_{x_1(y_1)} W_{0,2}(x_1(y_1),x_2(y_2))\bigg)\\\nonumber
	&+\frac{1}{24}\partial^2_{y_2}\bigg(x'_1(y_1)x'_2(y_2) \partial^2_{x_2(y_2)} W_{0,2}(x_1(y_1),x_2(y_2))\bigg).
\end{align}

For the other graphs from Fig.\ref{fig:gn12}, we just have to expand all $S(\hbar u \partial_{x})$ to the constant term´ 1.  The left lower graph in Fig.\ref{fig:gn12} yields therefore two terms:
\begin{align}\label{W122}
	&-\frac{1}{2}\partial_{y_1}\big( x'_1(y_1)x'_2(y_2)W_{0,3}(x_1(y_1),x_1(y_1),x_2(y_2))\big)\\\nonumber
	&-\frac{1}{2}\partial_{y_1}\big(x'_1(y_1)x'_2(y_2)W_{0,3}(x_1(y_1),x_2(y_2),x_2(y_2))\big);
\end{align}
the right upper graph in Fig.\ref{fig:gn12} also two terms:
\begin{align}
	\label{W123}
	&\frac{1}{2}\partial_{y_1}^2\big( x'_1(y_1)x'_2(y_2)W_{0,2}(x_1(y_1),x_2(y_2)) \hat{W}_{0,2}(x_1(y_1),x_1(y_1))\big)\\\nonumber
	&+\frac{1}{2}\partial_{y_2}^2\big( x'_1(y_1)x'_2(y_2)W_{0,2}(x_1(y_1),x_2(y_2)) \hat{W}_{0,2}(x_2(y_2),x_2(y_2))\big),
\end{align}
where $\hat{W}_{0,2}(x_1,x_2):=W_{0,2}(x_1,x_2)-\frac{1}{(x_1-x_2)^2}$; and the right lower graph in Fig.\ref{fig:gn12} just one term:
\begin{align}
	\label{W124}
	&\frac{1}{2}\partial_{y_1}\partial_{y_2}\big( x'_1(y_1)x'_2(y_2)W_{0,2}(x_1(y_1),x_2(y_2)) W_{0,2}(x_1(y_1),x_2(y_2))\big).
\end{align}

Adding all together, the functional relation reads
\begin{align*}
	W^\vee_{1,2}(y_1,y_2)=\eqref{W121}+\eqref{W122}+\eqref{W123}+\eqref{W124},
\end{align*}
which is the same as the already derived result in \cite[Prop. 6.1]{Hock:2022wer} through a loop insertion operator.

\subsubsection{$(g,n)=(2,1)$}\label{exgn21} The last example will be the simplest at genus $g=2$. Since each loop of a graph gives a factor of $\hbar^2$, we have to consider all graphs in $\mathcal{G}_1$ up to two loop. We have four graphs in $\mathcal{G}_1$ to consider at genus $g=2$, which are shown in Fig.\ref{fig:gn21}.
\begin{figure}[h]
	\scalebox{1}{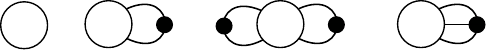}
	\caption{These graphs contribute to $(g,n)=(2,1)$. If we call the graphs $\Gamma_1,\Gamma_2,\Gamma_3,\Gamma_4$ from left to write, then we have $|\mathrm{Aut}(\Gamma_1)|=1$, $|\mathrm{Aut}(\Gamma_2)|=2$, $|\mathrm{Aut}(\Gamma_3)|=8$ and $|\mathrm{Aut}(\Gamma_4)|=6$}
	\label{fig:gn21}
\end{figure}
For the first graph $\Gamma_1$, we just have to expand the operator $\hat{O}$ defined in \eqref{OO-Operator}. This gives at order $[h^3]$
\begin{align}\nonumber
	&[\hbar^3]\hat{O}(x(y))=\sum_{m\geq 0}(-\partial_y)^m(-x'(y))[u^m]\bigg(W_{2,1}(x(y))+\frac{u^2}{24}\partial_{x(y)}^2W_{1,1}(x(y))\\\nonumber
	&\qquad \qquad+ \frac{u^4}{1920}\partial_{x(y)}^4y+\frac{u}{2}\big(W_{1,1}(x(y))\big)^2+\frac{u^3}{24}W_{1,1}(x(y))\partial_{x(y)}^2y+\frac{u^5}{24\cdot 24\cdot 2}\big(\partial_{x(y)}^2y\big)^2\bigg)\\\nonumber
	=&-x'(y)W_{2,1}(x(y))-\frac{1}{24}\partial^2_y\big(x'(y)\partial_{x(y)}^2W_{1,1}(x(y))\big)-\frac{1}{1920}\partial^4_y\big(x'(y)\partial_{x(y)}^4y\big)\\\label{W211}
	&+\frac{1}{2}\partial_y\big(x'(y)W_{1,1}(x(y))W_{1,1}(x(y))\big)+\frac{1}{24}\partial^3_y\big(x'(y)W_{1,1}(x(y))\partial_{x(y)}^2y\big)\\\nonumber
	&+\frac{1}{24\cdot 24\cdot 2}\partial^5_y\big(x'(y)(\partial_{x(y)}^2y)^2\big),
\end{align}
where we have used the expansion $S(\hbar u \partial_x)=1+\frac{\hbar^2u^2\partial_x^2}{24}+\frac{\hbar^4u^4\partial_x^4}{1920}+\mathcal{O}(\hbar^6)$.
For the second graph $\Gamma_2$ in Fig.\ref{fig:gn21} we have due to formula \eqref{ThmEq}
\begin{align}\nonumber
	&[\hbar^3]\frac{1}{2}\hat{O}(x(y))u^2\hbar^2 \bigg(1+2\frac{\hbar^2u^2\partial^2_{x(y)}}{24}\bigg) \big(\hat{W}_{0,2}(x(y),x(y))+\hbar^2 W_{1,2}(x(y),x(y))\big)\\\nonumber
	=&\frac{1}{2}\partial_y\big(x'(y) W_{1,2}(x(y),x(y))\big)-\frac{1}{2}\partial_y^2\big(x'(y) \hat{W}_{0,2}(x(y),x(y))W_{1,1}(x(y))\big)\\\label{W212}
	&-\frac{1}{2\cdot 24}\partial_y^4\big(x'(y) \hat{W}_{0,2}(x(y),x(y))\partial_x(y)^2y\big)+\frac{1}{ 24}\partial_y^3\big(x'(y) \partial_{x(y)}^2\hat{W}_{0,2}(x(y),x'(y'))\vert_{x'(y')=x(y)}\big).
\end{align}
For the third $\Gamma_3$ and fourth graph $\Gamma_4$ in Fig.\ref{fig:gn21}, we will need just the leading order of $S(\hbar u \partial_x)=1+\mathcal{O}(\hbar^2)$ and get
\begin{align}\label{W213}
	\frac{1}{8}\partial_y^3\big(x'(y) \hat{W}_{0,2}(x(y),x(y))\hat{W}_{0,2}(x(y),x(y))\big)
\end{align}
and 
\begin{align}\label{W214}
	-\frac{1}{6}\partial_y^2\big(x'(y) W_{0,3}(x(y),x(y),x(y))\big),
\end{align}
respectively. 

Adding all terms together, the result reads
\begin{align}\label{W21}
	W^{\vee}_{2,1}(y)=\eqref{W211}+\eqref{W212}+\eqref{W213}+\eqref{W214}.
\end{align}
\subsubsection{Verification for $(g,n)=(2,1)$ with the Airy Curve}
To verify the functional relation for $(g,n)=(2,1)$, we consider the example of the Airy curve
\begin{align}\label{Airy}
	x(z)=z^2,\qquad y(z)=z,
\end{align}
where $x,y$ maps from $\mathbb{C}\to\mathbb{C}$. Applying TR \eqref{TR} with definition \eqref{cor}, the first few correlators $W_{g,n}$ are
\begin{align*}
	W_{0,1}(x(z))=&y(z)=z\\
	W_{0,2}(x_1(z_1),x_2(z_2))=&\frac{1}{4z_1z_2(z_1-z_2)^2}\\
	\hat{W}_{0,2}(x(z),x(z))=&\frac{1}{16 z^4}\\
	W_{0,3}(x_1(z_1),x_2(z_2),x_3(z_3))=&-\frac{1}{16z_1^3z_2^3z_3^3}\\
	W_{1,1}(x(z))=&-\frac{1}{32z^5}\\
	W_{1,2}(x_1(z_1),x_2(z_2))=&\frac{5z_1^4+5z_2^4+3z_1^2z_2^2}{128z_1^7z_2^7}\\
	W_{2,1}(x(z))=&-\frac{105}{2048z^{11}}.
\end{align*}

On the other hand, we have for the correlator $W^\vee_{g,n}$ after exchanging the role of $x$ and $y$:
\begin{align*}
	W^\vee_{0,1}(y(z))=&x(z)=z^2\\
	W^\vee_{0,2}(y_1(z_1),y_2(z_2))=&\frac{1}{(z_1-z_2)^2}\\
	\hat{W}^\vee_{0,2}(y_1(z_1),y_2(z_2))=&\frac{1}{(z_1-z_2)^2}-\frac{1}{(y_1(z_1)-y_2(z_2))^2}=0\\
	W^\vee_{g,n}(y_I(z_I))=&0\qquad \forall 2g+n-2>0,
\end{align*}
since $y(z)$ is unramified over $\mathbb{C}$. 

Now, we check \eqref{W21} by inserting the previously computed correlators $W_{g,n}$ for the Airy curve \eqref{Airy}. We write (with some lack of notation) $y_i=y_i(z_i)$ and $x_i(y_i)=x_i(z_i)$, which gives for the derivatives $\frac{\partial}{\partial y_i}=\frac{\partial}{\partial z_i}$ and $\frac{\partial}{\partial x_i}=\frac{\partial}{\partial (z_i)^2}=\frac{1}{2z_i}\frac{\partial}{\partial z_i}$. Furthermore, we need $x'(y)=\frac{dx(z)}{dy(z)}=2z$ and get
\begin{align*}
	\eqref{W211}=\frac{87}{32 z^{10}},\quad \eqref{W212}=-\frac{477}{128 z^{10}},\quad \eqref{W213}=-\frac{63}{128 z^{10}},\qquad \eqref{W214}=\frac{3}{2 z^{10}}.
\end{align*}
All four terms sum up to 0, as it should be, since $W^\vee_{2,1}(y(z))=0$.

Changing the role of $x,y$, we can just interchange $W_{g,n}(x_I(z_I))$ and $W^{\vee}_{g,n}(y_I(z_I))$.  
Inserting it now into \eqref{W21}, we recognise that almost all terms vanish except the last term of \eqref{W211}, which is
\begin{align*}
	&\frac{1}{24\cdot 24\cdot 2}\partial^5_{y(z)}\big(x'(y)(\partial_{x(z)}^2y(z))^2\big)
	=\frac{1}{24\cdot 24\cdot 2}\frac{\partial^5}{\partial (z^2)^5}\bigg(\frac{4}{2z}\bigg)=-\frac{105}{2048z^{11}},
\end{align*}
as expected.

\subsubsection{$(g,n)=(g,1)$}
For $n=1$, we can get a much simpler structure for the graphs appearing in $\G_1$. These graphs can be classified quite explicitly together with their automorphisms.
We will take \eqref{ThmEq} for the case $(g,1)$, reformulate the weight $\hat{c}$, express the symmetry factors and carry out the sum over graphs explicitly.

If we take Theorem \ref{Thm:main} into account and follow similar steps as in \eqref{Phi1} for the weights, we can just replace the argument of $\Phi_n$ in \eqref{Phi1} by $x_i+\varepsilon_i\frac{\hbar u_i}{2}$. This happens once again because of the two different formal derivatives $e^{a\partial_x}f(x)=f(x+a)$ and $e^{ax\partial_x}f(x)=f(xe^a)$.
It is convenient to define
\begin{align*}
	\hat{\Phi}_n(x_1,...,x_n;\hbar,u_I):=&\hat{c}(u_I,x_I)=\bigg(\prod_{i= 1}^n\hbar u_i S(\hbar u_i\partial_{x_i})\bigg)\big( W_{n}(x_1,...,x_n)\big)\\
	=&\sum_{(\varepsilon_1,...,\varepsilon_n)\in\{ 1,-1\}^n} (-1)^{\#(\varepsilon_i=-1)} \Phi_n\bigg(x_1+\varepsilon_1\frac{\hbar u_1}{2},...,x_n+\varepsilon_n\frac{\hbar u_n}{2}\bigg),
\end{align*}
where the sum is over all different configurations of $(\varepsilon_1,...,\varepsilon_n)$ with $\varepsilon_i\in \{1,-1\}$, which are $2^n$.
The $\Phi_n(x_1,...,x_n):=\int_{o}^{x_1} dx'_1 ...\int_o^{x_n} dx'_n W_{n}(x'_1,...,x'_n)$ are the primitives of the correlators.

Then, for any $\Gamma\in \G_1$, the automorphisms factorise in two different sets of permutations. First, an overall permutation which permutes the $\bullet$-vertices, and second the permutation of the edges corresponding to one $\bullet$-vertex. This means that for $\mathcal{I}(\Gamma)$ with $\Gamma\in \G_1$, we find
\begin{align*}
	|\mathrm{Aut}(\Gamma)|=|\I(\Gamma)|\prod_{I\in \I(\Gamma)}|I|!,
\end{align*}
where $|\I(\Gamma)|$ is the cardinality of $\I(\Gamma)$ meaning the number of $\bullet$-vertices, and $|I|$ the cardinality of $I$ meaning the number of edges attached to the $\bullet$-vertex characterised by $I$. Since we are summing over all $\Gamma\in \G_1$, the sum together with the symmetry factor $\frac{1}{|\mathrm{Aut}(\Gamma)|}$ can be combined in an exponential-function (similar to the $1$-point correlators $W_1$ included in $\hat{O}$). The symmetry factor of $\frac{1}{|\I(\Gamma)|}$ generates the coefficient of the exponential. Thus, \eqref{ThmEq} can be written for $n=1$ as
\begin{align*}
	W^{\vee}_{g,1}(y) =[\hbar^{2g-1}]\sum_{m\geq 0}(-\partial_y)^m[u^m](-x'(y))\frac{\exp\bigg(\sum_{i=1}^\infty \frac{\hat{\Phi}_i(x(y),...,x(y);\hbar,u)}{i!}-\delta_{i,1}y\bigg)}{u\hbar}.
\end{align*}
Multiplying by $\hbar^{2g-1}$ and summing over $g$ (we let the sum over $m$ start at $-1$ for $g=0$) yields 
\begin{align}\label{W1}
	W^{\vee}_{1}(y)=\sum_{m\geq 0}(-\partial_y)^m[u^m](-x'(y))\frac{\exp\bigg(\sum_{i=1}^\infty \frac{\hat{\Phi}_i(x(y),...,x(y);\hbar,u)}{i!}-\delta_{i,1}y\bigg)}{u\hbar},
\end{align}
where $W^{\vee}_{1}(y)=\sum_{g=0}^\infty \hbar^{2g-1} W^{\vee}_{g,1}(y)$. The expression \eqref{W1} indicates a connection to the wave function of the quantum spectral curves, see \cite{Norbury:2015lcn} for a survey.

\subsection{A further combinatorial interpretation}\label{Sec.comb}
The formula \eqref{ThmEq} of Theorem \ref{Thm:main} and the examples have shown that each graph $\Gamma\in \G_n$ can generate several terms through the exponential included in the operator $\hat{O}(x(y))$ and/or the formal expansion of $S(\hbar u \partial_x)$. We will now provide a new set of graphs, which will capture the entire structure term by term. In other words, each of the graphs defined in Definition \ref{def:graphnew} will generate exactly one term in the functional relation.  
\begin{definition}\label{def:graphnew}
	Let $\mathcal{G}_{n}^{(g)}$ be the set of decorated connected bicoloured  graphs $\Gamma$ with $\bigcirc$-vertices and  $\bullet$-vertices, where the number of $\bigcirc$-vertices is $n$. A graph $\Gamma$ satisfies the following conditions:
	\begin{itemize}
		\item[-] the  $\bigcirc$-vertices are labelled from $1,...,n$
		\item[-] associate an integer $h_j\geq 0$ to any edge
		\item[-] associate an integer $g_i\geq 0$ to any $\bullet$-vertex
		\item[-] edges are only connecting $\bullet$-vertices with $\bigcirc$-vertices
		\item[-] no 1-valent $\bullet$-vertices with $g_j=0$ and adjacent edge with $h=0$ is permitted, where $g_i$ is the associated integer of the $\bullet$-vertex and $h$ the associated integer of the adjacent edge
		\item[-] $g=\sum_ig_i+\sum_jh_j+b_1$, where $b_1$ is the first Betti number of $\Gamma$
	\end{itemize}
	For a graph $\Gamma\in \mathcal{G}^{(g)}_{n}$, let $r_{i}(\Gamma)$ be the valence of the $i^{\text{th}}$ $\bigcirc$-vertex. 
	
	Let $(g_i,D)$ be a couple associated to a $\bullet$-vertex, where $g_i$ is the associated integer of the $\bullet$-vertex and $D=\{(i_1,h_1),...,(i_k,h_k)\}$ is a set of couples, where $i_j\in \{1,...,n\}$ is the label of the $\bigcirc$-vertex connected to the $\bullet$-vertex through the edge with the associated number $h_j$.	
	Let $\mathcal{I}^{g}(\Gamma)$ be the set of all couples $(g_i,D)$ for a graph $\Gamma\in \mathcal{G}_{n}^{(g)}$.
\end{definition}\noindent
The automorphism group $\mathrm{Aut}(\Gamma)$ with $\Gamma\in \G^{(g)}_n$ consists of permutations of edges which preserve the structure of $\Gamma$ considering the labels of the $\bigcirc$-vertices and all associated decorations to edges and $\bullet$-vertices. A graph $\Gamma\in \mathcal{G}^{(g)}_n$ is up to automorphisms completely characterised by $\mathcal{I}^{g}(\Gamma)$. 

The last two conditions of Definition \ref{def:graphnew}
\begin{itemize}
	\item[-] no 1-valent $\bullet$-vertex with $g_j=0$ and adjacent edge with $h=0$ is permitted
	\item[-] $g=\sum_ig_i+\sum_jh_j+b_1$
\end{itemize}
imply that the set $\G^{(g)}_n$ is for any $g$ and $n$ finite. 

The difference to the graphs defined in Definition \ref{def:graph} are that we permit some $\bullet$-vertices of valence 1. Furthermore, we include additionally decorations for the edges and the $\bullet$-vertices by a kind of intrinsic genus $h_j$ and $g_i$ respectively. The 1-valent $\bullet$-vertices are generated by the exponential of $\hat{O}(x(y))$-operator defined in \eqref{OO-Operator}. The decoration of the $\bullet$-vertex with $g_i$ is generated by the expansion (as defined in Theorem \ref{Thm:mainintro}) of any 
\begin{align*}
	W_{n}(x_I)=\sum_{g=0}^\infty\hbar^{2g+n-2}W_{g,n}(x_I)
\end{align*}
and picks up the $\hbar^{2g_i+n-2}$-order. The decoration of the edges with $h_j$ is generated by the expansion of $S(\hbar u \partial_x)$ and picks up the $\hbar^{2h_j}$-order. 

\begin{definition}[weight]\label{def:weightnew}
	Let $(g_i,D)\in \I^g(\Gamma)$ with $\Gamma\in \mathcal{G}^{(g)}_n$ and $D=\{(i_1,h_1),...,(i_k,h_k)\}$ as in Definition \ref{def:graphnew}. Let further be $I_D=\{i_1,...,i_k\}$, then we define the weight $\varpi^{g_i}_D$ associated to $(g_i,D)$ by
	\begin{align*}
		\varpi^{g_i}_D:=\prod_{l=1}^k\bigg( \frac{1}{2^{2h_l}(2h_l+1)!}\frac{\partial^{2h_l}}{\partial x_{i_l}^{2h_l}}\bigg) W_{g_i,k}(x_{I_D})
	\end{align*}
	and for $(0,\{(i,h_1),(i,h_2)\})$ the special case
	\begin{align*}
		\varpi^{0}_{\{(i,h_1),(i,h_2)\}}:=\bigg( \frac{1}{2^{2h_1}(2h_1+1)!}\frac{\partial^{2h_1}}{\partial x_{i}^{2h_1}}\bigg)\bigg( \frac{1}{2^{2h_2}(2h_2+1)!}\frac{\partial^{2h_2}}{\partial x^{2h_2}}\bigg) \bigg(W_{0,2}(x_i,x)-\frac{1}{(x_i-x)^2}\bigg)\bigg\vert_{x=x_i}.
	\end{align*}
\end{definition}\noindent
The functional relation of Theorem \ref{Thm:main} turns into:
\begin{proposition}\label{Prop:newformula}
	For $2g-2+n>0$, the functional relation holds
	\begin{align}\label{ThmEqComb}
		W^\vee_{g,n}(y_1,...,y_n) =\sum_{\Gamma\in\G^{(g)}_n}\frac{1}{|\mathrm{Aut}(\Gamma)|}\prod_{i=1}^n\bigg( -\frac{\partial}{\partial y_i}\bigg)^{r_i(\Gamma)+2H_i(\Gamma)-1}\!\!\!\!\!\!\!\!\! (-x'_i(y_i))\prod_{(g_j,D)\in\I^g(\Gamma)}\varpi^{g_j}_D,
	\end{align}
	where the graphs $\mathcal{G}_n^{(g)}$ are defined in Definition \ref{def:graphnew}, $r_i(\Gamma)$ is the valence of the $i^{\text{th}}$ $\bigcirc$-vertex of the graph $\Gamma\in \G^{(g)}_n$ and $H_i(\Gamma)=\sum_{e\in E_i}h_e$ is the sum over all associated integer of the edges adjacent to the $i^{\text{th}}$ $\bigcirc$-vertex.

	\begin{proof}
	The expansion of the exponential in Theorem \ref{Thm:main} yields for the  $i^{\text{th}}$-labelled $\bigcirc$-vertex a power series with a factor $\frac{1}{k_i!}$ at $k_i^{\text{th}}$ power. This expansion generates graphically $k_i$ 1-valent $\bullet$-vertices attached to the $i^{\text{th}}$ $\bigcirc$-vertex. The symmetry factor becomes 
	\begin{align*}
		\frac{1}{|\mathrm{Aut}(\Gamma)|k_1!....k_n!}=\frac{1}{|\mathrm{Aut}(\tilde{\Gamma})|},
	\end{align*}
	where $\Gamma\in \G_n$ and $\tilde{\Gamma}\in \G^{(g)}_n$\footnote{Edge and $\bullet$-vertex decorations are ignored. Strictly speaking, at this stage, the set $\G^{(g)}_n$ has no decorations, which will be incorporated through the expansions of $S(x)$ and $W_n$.}  for some $g$, since each $k_i$ 1-valent $\bullet$-vertices gives only additional automorphisms by the $k_i!$ permutations for each $i$.  
	
	The expansion of $S(\hbar u\partial_x)$ in $\hat{c}(u_I,x_I)$ of \eqref{ccv} generates at order $[\hbar^{2h}]$ the factor $ \frac{u^{2h}\partial_x^{2h}}{2^{2h}(2h+1)!}$, which associates graphically the genus $h$ (the edge decoration of $\tilde{\Gamma}$) to the corresponding edge. The genus expansion of the correlator in $\hat{c}(u_I,x_I)$ of \eqref{ccv} is 
	\begin{align*}
		W_{|I|}(x_I)=\sum_{g_i=0}^\infty\hbar^{2g_i+|I|-2}W_{g_i,|I|}(x_I)
	\end{align*}
	and associates graphically the genus $g_i$ to the $\bullet$-vertex with a factor $\hbar^{2g_i+|I|-2}$. 
	
	Restricting to the $[\hbar^{2g+n-2}]$ coefficient in \eqref{ThmEq} picks from the different expansion all terms such that the last condition of Definition \ref{def:graphnew} is satisfied $g=\sum_ig_i+\sum_eh_e+b_1$ (using the additivity property of the Euler-characteristic). The factors of $u_i$ from the expansions of the $S(\hbar u_i\partial_{x_i})$ add together to $u_i^{r_i+H_i-1}$, where $H_i(\Gamma)=\sum_{e\in E_i}h_e$ is the sum over all genera associated to the edges adjacent to the $i^{\text{th}}$ $\bigcirc$-vertex and $r_i$ its valence. 
	
	The 1-valent $\bullet$-vertex with adjacent edge associated with $h=0$ does not appear due to the subtraction $-y u$ in the exponential of $\hat{O}(x(y))$, which cancels exactly the first order expansion of $\hbar u S(\hbar u\partial_x)W_1(x(y))$.
	\end{proof}
\end{proposition}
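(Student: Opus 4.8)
The strategy is to start from the functional relation \eqref{ThmEq} of Theorem \ref{Thm:main} and to carry out, simultaneously, the three formal expansions still hidden inside it, so that each resulting monomial is indexed by exactly one decorated graph of $\mathcal{G}^{(g)}_n$. First, the exponential inside each operator $\hat O(x_i(y_i))$ in \eqref{OO-Operator} is written as $\sum_{k_i\geq 0}\frac{1}{k_i!}\bigl(\hbar u_iS(\hbar u_i\partial_{x_i})W_1(x_i)-y_iu_i\bigr)^{k_i}$; graphically the $k_i$-th power produces $k_i$ one-valent $\bullet$-vertices attached to the $i$-th $\bigcirc$-vertex, each carrying a one-point correlator. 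Second, every series $S(\hbar u\partial_{x})=\sum_{h\geq 0}\frac{(\hbar u)^{2h}}{2^{2h}(2h+1)!}\partial_{x}^{2h}$ attached to an edge is expanded term by term, which decorates that edge with the integer $h$. Third, every correlator $W_{|I|}(x_I)$ occurring in a weight $\hat c(u_I,x_I)$ (and likewise $W_1$ inside $\hat O$) is replaced by its genus expansion $\sum_{g\geq 0}\hbar^{2g+|I|-2}W_{g,|I|}(x_I)$, which decorates the corresponding $\bullet$-vertex with the integer $g$. Connectedness, the labelling of the $\bigcirc$-vertices, and the bipartite edge structure are inherited directly from the graphs $\mathcal{G}_n$ of Definition \ref{def:graph}.

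Next I would read off the remaining defining conditions of Definition \ref{def:graphnew}. The condition $\chi=2-g_j-\sum_e(h_e+1)\leq 0$ on each $\bullet$-vertex is forced automatically: it can only fail for a one-valent $\bullet$-vertex with $g_j=0$ whose unique adjacent edge has $h_e=0$, and such a vertex would contribute, via the term $\hbar u_iS(\hbar u_i\partial_{x_i})W_1(x_i)$, precisely the summand $W_{0,1}(x_i(y_i))=y_i$ that is cancelled by the subtraction $-y_iu_i$ in the exponent of $\hat O$; no valence-$\geq 2$ $\bullet$-vertex can violate $\chi\leq 0$, and the only correlators that would otherwise threaten it, $W_{0,1}$ and $W_{0,2}$, give a valence-one vertex (handled) and a $\chi=0$ vertex respectively. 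The genus balance $g=\sum_j g_j+\sum_e h_e+b_1$ comes from extracting the coefficient of $\hbar^{2g+n-2}$ in \eqref{ThmEq}: each of the $n$ operators $\hat O$ carries an $\hbar^{-1}$, each edge carries $\hbar^{1+2h_e}$ from $\hbar u\,S(\hbar u\partial_x)$, and each $\bullet$-vertex carries $\hbar^{2g_j+|I|-2}$; summing these and substituting $b_1=E-V+1$ for the connected decorated graph produces exactly the stated relation, which is the additivity of the Euler characteristic invoked in the statement. Finally, power-counting in each $u_i$ is immediate: the $r_i(\Gamma)$ edges at the $i$-th $\bigcirc$-vertex contribute $u_i^{r_i(\Gamma)}$ from the factors $\hbar u$ and a further $u_i^{2H_i(\Gamma)}$ from their $S$-expansions, one power is absorbed by the $\frac{1}{\hbar u_i}$ in $\hat O$, and the residue $[u_i^{r_i(\Gamma)+2H_i(\Gamma)-1}]$ combines with $(-\partial_{y_i})^m$ to produce the operator $(-\partial_{y_i})^{r_i(\Gamma)+2H_i(\Gamma)-1}(-x_i'(y_i))$ of \eqref{ThmEqComb}; the operators $\frac{1}{2^{2h_e}(2h_e+1)!}\partial_{x}^{2h_e}$ left over from the $S$-expansions, acting on $W_{g_j,|I|}(x_I)$ — with the repeated-argument case governed by \eqref{special} — reassemble into exactly the weights $\varpi^{g_j}_D$ of Definition \ref{def:weightnew}.

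The step I expect to be the main obstacle is the reconciliation of the symmetry factors: one must check that $\sum_{\Gamma\in\mathcal{G}_n}\frac{1}{|\mathrm{Aut}(\Gamma)|}$, together with the factors $\prod_i\frac{1}{k_i!}$ coming from the exponentials and the unweighted summation over all decorations $(g_j)$ and $(h_e)$, reorganizes precisely into $\sum_{\tilde\Gamma\in\mathcal{G}^{(g)}_n}\frac{1}{|\mathrm{Aut}(\tilde\Gamma)|}$. For the pendant one-valent $\bullet$-vertices this is the familiar exponential-formula bookkeeping: grouping the identical summands of $\frac{1}{k_i!}(\cdots)^{k_i}$ turns $\frac{1}{k_i!}$ into $\prod_\tau\frac{1}{m_{i,\tau}!}$, where $m_{i,\tau}$ is the number of pendant legs of each isomorphism type $\tau=(g_j,h_e)$ at the $i$-th $\bigcirc$-vertex, which is exactly the factor by which such legs enlarge the automorphism group. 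For the decorations on the core subgraph (the $\bullet$-vertices of valence $\geq 2$ and their incident edges) it is the orbit–stabilizer principle: the product $\prod_D\varpi^{g_j}_D$ is invariant under the natural action of $\mathrm{Aut}(\Gamma)$ on decoration assignments, because the correlators $W_{g,n}$ and the regularized special case \eqref{special} are symmetric in their arguments, so summing an $\mathrm{Aut}(\Gamma)$-invariant weight over all assignments with prefactor $\frac{1}{|\mathrm{Aut}(\Gamma)|}$ equals summing over orbits with prefactor $\frac{1}{|\mathrm{stabilizer}|}=\frac{1}{|\mathrm{Aut}(\tilde\Gamma)|}$. Combining these two observations yields the identity $|\mathrm{Aut}(\Gamma)|\,k_1!\cdots k_n!=|\mathrm{Aut}(\tilde\Gamma)|$ in the form in which it is used and finishes the argument.
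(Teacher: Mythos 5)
Your proposal is correct and follows essentially the same route as the paper's proof: expand the exponential in $\hat O$, the series $S(\hbar u\partial_x)$ on each edge, and the genus series of each correlator, match the resulting monomials to the decorated graphs of Definition \ref{def:graphnew}, and check the $\hbar$- and $u$-power counting. You supply somewhat more detail than the paper does on two points it only asserts --- the derivation of $g=\sum_j g_j+\sum_e h_e+b_1$ via $b_1=E-V+1$, and the reconciliation $|\mathrm{Aut}(\Gamma)|\,k_1!\cdots k_n!=|\mathrm{Aut}(\tilde\Gamma)|$ via the multinomial grouping of identical pendant decorations and orbit--stabilizer on the core --- but these are elaborations of the same argument rather than a different one.
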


\subsubsection{Example $(g,n)=(1,2)$}
We have considered this example already in \ref{exgn12}. The graphs shown in Fig.\ref{fig:gn12} are also included in $\G^{(1)}_1$ with all edges and $\bullet$-vertices decorated by $0$ (due to the Betti number $b_1=1$), except the left upper graph (since it has Betti number $b_1=0$) which turns in $\G^{(1)}_1$ into the graphs shown in Fig.\ref{fig:gn12new}.
\begin{figure}[h]
	\scalebox{1}{
\begingroup%
  \makeatletter%
  \providecommand\color[2][]{%
    \errmessage{(Inkscape) Color is used for the text in Inkscape, but the package 'color.sty' is not loaded}%
    \renewcommand\color[2][]{}%
  }%
  \providecommand\transparent[1]{%
    \errmessage{(Inkscape) Transparency is used (non-zero) for the text in Inkscape, but the package 'transparent.sty' is not loaded}%
    \renewcommand\transparent[1]{}%
  }%
  \providecommand\rotatebox[2]{#2}%
  \newcommand*\fsize{\dimexpr\f@size pt\relax}%
  \newcommand*\lineheight[1]{\fontsize{\fsize}{#1\fsize}\selectfont}%
  \ifx\svgwidth\undefined%
    \setlength{\unitlength}{371.59836158bp}%
    \ifx\svgscale\undefined%
      \relax%
    \else%
      \setlength{\unitlength}{\unitlength * \real{\svgscale}}%
    \fi%
  \else%
    \setlength{\unitlength}{\svgwidth}%
  \fi%
  \global\let\svgwidth\undefined%
  \global\let\svgscale\undefined%
  \makeatother%
  \begin{picture}(1,0.18305822)%
    \lineheight{1}%
    \setlength\tabcolsep{0pt}%
    \put(0,0){\includegraphics[width=\unitlength,page=1]{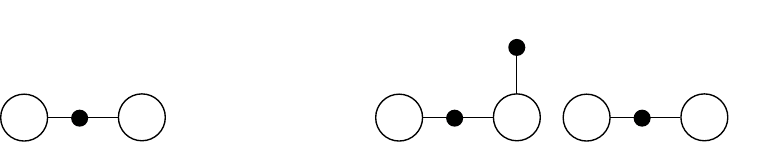}}%
    \put(0.10180028,0.05134259){\makebox(0,0)[t]{\lineheight{1.25}\smash{\begin{tabular}[t]{c}\tiny 1\end{tabular}}}}%
    \put(0.66692645,0.14216646){\makebox(0,0)[t]{\lineheight{1.25}\smash{\begin{tabular}[t]{c}\tiny 1\end{tabular}}}}%
    \put(0.91921512,0.07152567){\makebox(0,0)[t]{\lineheight{1.25}\smash{\begin{tabular}[t]{c}\tiny 1\end{tabular}}}}%
    \put(0,0){\includegraphics[width=\unitlength,page=2]{gn12graphsnew.pdf}}%
    \put(0.37427189,0.04125105){\makebox(0,0)[t]{\lineheight{1.25}\smash{\begin{tabular}[t]{c}\tiny 1\end{tabular}}}}%
  \end{picture}%
\endgroup%
}
	\caption{The upper left graph in Fig.\ref{fig:gn12} from $\mathcal{G}_1$ turns by considering the graphs $\mathcal{G}^{(1)}_1$ into these four graphs (up to possible permutation of the $\bigcirc$-vertex after label them). All edges and $\bullet$-vertices  with no number are associated with a 0.}
	\label{fig:gn12new}
\end{figure}
Applying Definition \ref{def:graphnew} and the formula \eqref{ThmEqComb} with the weight given by Definition \ref{def:weightnew} leads to the graphs shown in Fig.\ref{fig:gn12new}, providing all contributions in \eqref{W121}. The graphs with Betti number $b_1=1$ have exactly the same contribution as before. Consequently, we achieve the same functional relation as in \ref{exgn12} from a more transparent combinatorial point of view.

\subsubsection{Example $(g,n)=(2,1)$}
We have considered the example with $(g,n)=(2,1)$ already in \ref{exgn21}. All graphs with Betti number $b_1=0$ included in $\G^{(2)}_1$ are shown in Fig.\ref{fig:gn21new1}.
\begin{figure}[h]
	\scalebox{1}{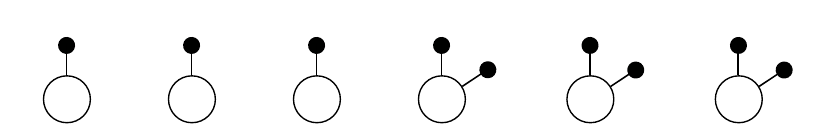}
	\caption{The first graph in Fig.\ref{fig:gn21} turns by considering $\G^{(2)}_1$ into these six graphs, where the last three graphs have $|\mathrm{Aut}(\Gamma)|=2$, but the second last graph is taken two times after permuting the edges.}
	\label{fig:gn21new1}
\end{figure}
We need in total a genus of $g=2$, which means that the associated numbers of edges and $\bullet$-vertices sum up to two. Taking Definition \ref{def:weightnew} for the weights, we deduce exactly \eqref{W211}. 

Just as an example consider the last graph in Fig.\ref{fig:gn21new1}, we find for $H(\Gamma)=2+2(1+1)-1=5$ in formula \eqref{ThmEqComb}. Then, take for the automorphisms $|\mathrm{Aut}(\Gamma)|=2$. Each $\bullet$-vertex with an attached edge associated with a 1 has the weight $\frac{\partial_{x(y)}^2y}{24}$. Thus, the contribution of the last graph in Fig.\ref{fig:gn21new1} is due to \eqref{ThmEqComb} 
\begin{align*}
	\frac{1}{24\cdot 24\cdot 2}\partial^5_y\big(x'(y)(\partial_{x(y)}^2y)^2\big)
\end{align*}
which is indeed the last term in \eqref{W211}.

All graphs with Betti number $b_1=1$, which are included in $\G^{(2)}_1$, are shown in Fig.\ref{fig:gn21new2}. These turn with Definition \ref{def:weightnew} and formula \eqref{ThmEqComb} into \eqref{W212}. 
\begin{figure}[h]
	\scalebox{1}{
\begingroup%
  \makeatletter%
  \providecommand\color[2][]{%
    \errmessage{(Inkscape) Color is used for the text in Inkscape, but the package 'color.sty' is not loaded}%
    \renewcommand\color[2][]{}%
  }%
  \providecommand\transparent[1]{%
    \errmessage{(Inkscape) Transparency is used (non-zero) for the text in Inkscape, but the package 'transparent.sty' is not loaded}%
    \renewcommand\transparent[1]{}%
  }%
  \providecommand\rotatebox[2]{#2}%
  \newcommand*\fsize{\dimexpr\f@size pt\relax}%
  \newcommand*\lineheight[1]{\fontsize{\fsize}{#1\fsize}\selectfont}%
  \ifx\svgwidth\undefined%
    \setlength{\unitlength}{275.03708614bp}%
    \ifx\svgscale\undefined%
      \relax%
    \else%
      \setlength{\unitlength}{\unitlength * \real{\svgscale}}%
    \fi%
  \else%
    \setlength{\unitlength}{\svgwidth}%
  \fi%
  \global\let\svgwidth\undefined%
  \global\let\svgscale\undefined%
  \makeatother%
  \begin{picture}(1,0.21957112)%
    \lineheight{1}%
    \setlength\tabcolsep{0pt}%
    \put(0,0){\includegraphics[width=\unitlength,page=1]{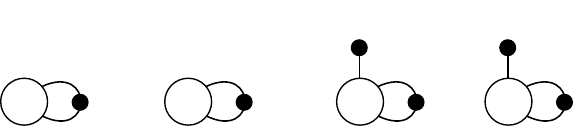}}%
    \put(0.16928971,0.03343145){\makebox(0,0)[t]{\lineheight{1.25}\smash{\begin{tabular}[t]{c}\tiny 1\end{tabular}}}}%
    \put(0.38744212,0.08796955){\makebox(0,0)[t]{\lineheight{1.25}\smash{\begin{tabular}[t]{c}\tiny 1\end{tabular}}}}%
    \put(0.62740975,0.16432289){\makebox(0,0)[t]{\lineheight{1.25}\smash{\begin{tabular}[t]{c}\tiny 1\end{tabular}}}}%
    \put(0.87010433,0.09069646){\makebox(0,0)[t]{\lineheight{1.25}\smash{\begin{tabular}[t]{c}\tiny 1\end{tabular}}}}%
  \end{picture}%
\endgroup%
}
	\caption{The second graph in Fig.\ref{fig:gn21} turns by considering $\G^{(2)}_1$ into these four graphs, where all have $|\mathrm{Aut}(\Gamma)|=2$, but the second  graph is taken two times after permuting the edges.}
	\label{fig:gn21new2}
\end{figure}

Finally, all graphs with Betti number $b_1=2$ are already shown in Fig.\ref{fig:gn21} and give the same contribution as before in \ref{exgn21} such that we derive the same functional relation for $W^\vee_{2,1}$ from Proposition \ref{Prop:newformula}.

\section{Application in Free Probability}\label{Sec.free}
This section aims to apply the $x$-$y$ symplectic transformation to free probability and to give the most simple functional relation between generating series of higher order moments and generating series of higher order free cumulants. 

To understand this, we first recall some facts about the combinatorics of ordinary and fully simple maps. The primary example of TR is the combinatorial problem of ordinary maps. It is carried out gently in the introductory book of TR by Eynard \cite{Eynard:2016yaa}. It can be represented as random $N\times N$-hermitian unitary invariant matrix model with the partition function
\begin{align}
	\mathcal{Z}:=\int_{H_N}dM\,e^{-N\mathrm{Tr}V(M)}
\end{align}
where $V(x)=\sum_{k=1}^{d+1}\frac{t_{k}}{k}x^k$ is a polynomial of degree $d+1$ and $H_N$ the space of hermitian $N\times N$-matrices. This partition function is by definition a formal matrix model in the sense that the expansion of the integrand (except for the Gaussian part) is interchanged with the integration. Let the probability weight be $d\mu:=\frac{1}{\mathcal{Z}}dM\,e^{-N\mathrm{Tr}V(M)}$ and $\langle \mathcal{O}(M)\rangle:=\int_{H_N}\mathcal{O}(M)d\mu$, we define the following correlators by the connected part of the expectation value of the resolvents 
\begin{align}\label{ordinary}
\hat{W}_{g,n}(x_1,...,x_n):=&N^{2g+n-2}\bigg\langle \prod_{i=1}^{n}\mathrm{Tr}\frac{1}{x_i-M}\bigg\rangle_{c}
=N^{2g+n-2}\sum_{k_1,...,k_n\geq 0}\frac{\bigg\langle \prod_{i=1}^{n}\mathrm{Tr}M^{k_i}\bigg\rangle_{c}}{x_1^{k1+1}...x_n^{k_n+1}},
\end{align}
which gives the combinatorics of ordinary maps. Up to some subtleties for $(g,n)\in \{(0,1),(0,2)\}$, $\hat{W}_{g,n}$ corresponds exactly to $W_{g,n}$ considered before with a very explicit spectral curve (see \cite{Eynard:2016yaa} for more details).

For a cycle $\gamma=(c_1c_2...c_k)$ of the symmetric group $S_N$ of length $k$, we denote
\begin{align*}
	\mathcal{P}^{(k)}_\gamma (M):=\prod_{m=1}^k M_{c_m,\gamma(c_m)}.
\end{align*}
Then, we define for pairwise disjoint cycles $\gamma_i\in S_N$ the correlators
\begin{align}\label{fullysimple}
	\hat{W}^\vee_{g,n}(y_1,...,y_n):=&
	=N^{2g+n-2}\sum_{k_1,...,k_n\geq 0}N^{k_1+...+k_n}\bigg\langle \prod_{i=1}^{n}\mathcal{P}^{(k_i)}_{\gamma_i}(M) \bigg\rangle_{c}y_1^{k_1-1}...y_n^{k_n-1}.
\end{align}
This correlators $\hat{W}^\vee_{g,n}$ are the generating series of the so-called fully simple maps
(see \cite{Garcia-Failde:2019iuf} for more details). 

It was then proven independently by two different groups with different techniques that $\hat{W}_{g,n}$ of \eqref{ordinary} and $\hat{W}^\vee_{g,n}$ of \eqref{fullysimple} are related by the $x$-$y$ symplectic transformation in TR \cite{Bychkov:2021hfh,Borot:2021eif}. In other words, $\hat{W}^\vee_{g,n}$ of \eqref{fullysimple} is related by Theorem \ref{Thm:main} (up to some subtleties for $(g,n)=\{(0,1),(0,2)\}$) to $\hat{W}_{g,n}$ of \eqref{ordinary}.

We will now turn to the theory of free probability, which arose from operator algebra from the pioneering works of Voiculescu \cite{Voiculescu1986AdditionOC,Voiculescu1991}. Instead of  talking about independent variables (as in classical probability theory), the notion of freeness  for non-commutative variables was introduced, where
free cumulants play the same role as classical cumulants for usual commutative random variables. In later works \cite{Mingo2004SecondOF,MINGO2007212,Collins2006SecondOF}, this notion was generalised to higher order freeness (we refer to these works for more details). For a random variable $a$, the generating series of higher order moments $\varphi_n[a^{k_1},...,a^{k_n}]$ and higher order free cumulants $\kappa_{k_1,...,k_n}[a,...,a]$ is defined by
\begin{align*}
	M_n(X_1,...,X_n):=&\delta_{n,1}+\sum_{k_1,...,k_n\geq 1}\varphi_n[a^{k_1},...,a^{k_n}]\prod_{i=1}^nX_i^{k_i}\\
	C_n(Y_1,...,Y_n):=&\delta_{n,1}+\sum_{k_1,...,k_n\geq 1}\kappa_{k_1,...,k_n}[a,...,a]\prod_{i=1}^nY_i^{k_i}.
\end{align*}
The functional relation for $C_1$ and $M_1$ was derived in \cite{Voiculescu1986AdditionOC}
\begin{align}\label{voi}
	C_1(X M(X))=M_1(X)
\end{align}
and at second order in \cite{Mingo2004SecondOF}
\begin{align*}
	M_2(X_1,X_2)+\frac{X_1X_2}{(X_1-X_2)^2}=\frac{d\log Y_1}{d\log X_1} \frac{d\log Y_2}{d\log X_2}\bigg(C_2(Y_1,Y_2)+\frac{Y_1Y_2}{(Y_1-Y_2)^2}\bigg).
\end{align*}

A fundamental example of free probability is given if we take the algebra of the random variable $a$ as the algebra of random matrices. The moments and free cumulants are denoted by $\varphi^M_{l_1,...,l_n}$ and $\kappa^M_{l_1,...,l_n}$. For this algebra we have
\begin{align*}
	\varphi^M_{l_1,...,l_n}=&\lim_{N\to \infty} N^{n-2}\bigg\langle \prod_{i=1}^{n}\mathrm{Tr}M^{k_i}\bigg\rangle_{c}\\
	\kappa^M_{l_1,...,l_n}=&\lim_{N\to \infty} N^{n+k_1+...+k_n-2}\bigg\langle \prod_{i=1}^{n}\mathcal{P}^{(k_i)}_{\gamma_i}(M) \bigg\rangle_{c}.
\end{align*}
Recently, the notion of higher order freeness was even generalised to higher genus in \cite{Borot:2021thu} by so-called  surfaced permutations to \textit{surfaced free probability}. They manage to define consistently higher order moments $\varphi_{g,n}(a^{k_1},...,a^{k_n})$ and higher order free cumulants $\kappa_{g;k_1,...,k_n}(a,...,a)$ of higher genus. The higher genus generating functions are defined by
\begin{align}\label{freeM}
M_{g,n}(X_1,...,X_n):=&\delta_{n,1}\delta_{g,0}+\sum_{k_1,...,k_n\geq 1}\varphi_{g,n}[a^{k_1},...,a^{k_n}]\prod_{i=1}^nX_i^{k_i}\\\label{freeC}
C_{g,n}(Y_1,...,Y_n):=&\delta_{n,1}\delta_{g,0}+\sum_{k_1,...,k_n\geq 1}\kappa_{g;k_1,...,k_n}[a,...,a]\prod_{i=1}^nY_i^{k_i}
\end{align}
and follow the functional relation implied by the $x$-$y$ symplectic transformation of TR. Therefore, we are able to give a simpler functional relation which is valid for \eqref{freeM} and \eqref{freeC} defined properly in \cite{Borot:2021thu}.
\begin{corollary}\label{cor:free}
	Let $(g_i,D)\in \I^g(\Gamma)$ with $\Gamma\in \mathcal{G}^{(g)}_n$ and $D=\{(i_1,h_1),...,(i_k,h_k)\}$ as in Definition \ref{def:graphnew}. Let further be $I_D=\{i_1,...,i_k\}$, then we define the weight $\varpi^{g_i,free}_D$ associated to $(g_i,D)$ by
	\begin{align*}
	\varpi^{g_i,free}_D:=\prod_{l=1}^k\bigg( \frac{1}{2^{2h_l}(2h_l+1)!}\frac{\partial^{2h_l}}{\partial Y_{i_l}^{2h_l}}\bigg) \frac{C_{g_i,k}(Y_{I_D})}{\prod_{i\in I_D} Y_i}.
	\end{align*}
	For $2g-2+n>0$ and  $Y_i=X_i M_1(X_i)$, the moment-cumulant relation in free probability for unitarily invariant ensembles reads
	\begin{align}\label{corEqComb}
	&M_{g,n}(X_1,...,X_n)\prod_{i=1}^nX_i \\\nonumber
	=&\sum_{\Gamma\in\G^{(g)}_n}\frac{1}{|\mathrm{Aut}(\Gamma)|}\prod_{i=1}^n\bigg( X_i^2\frac{\partial}{\partial X_i}\bigg)^{r_i(\Gamma)+2H_i(\Gamma)-1} \bigg(X_i^2\frac{dY_i}{dX_i}\bigg)\prod^\prime_{(g_j,D)\in\I^g(\Gamma)}\varpi^{g_j,free}_D,
	\end{align}
	where the graphs $\mathcal{G}_n^{(g)}$ are defined in Definition \ref{def:graphnew}, $r_i(\Gamma)$ is the valence of the $i^{\text{th}}$ $\bigcirc$-vertex of the graph $\Gamma\in \G^{(g)}_n$ and $H_i(\Gamma)=\sum_{e\in E_i}h_e$ is the sum over all associated integer of the edges adjacent to the $i^{\text{th}}$ $\bigcirc$-vertex.
	The primed product $\prod^\prime_{(g_j,D)\in\I^g(\Gamma)}$ replaces $C_{0,2}(Y_i,Y_j)$ by $C_{0,2}(Y_i,Y_j)+\frac{Y_iY_j}{(Y_i-Y_j)^2}$ except for $Y_i=Y_j$.
	\begin{proof}
		Consider the fact that fully simple maps and ordinary maps are related through the $x$-$y$ symplectic transformation \cite{Bychkov:2021hfh,Borot:2021eif}. The relation of moments to ordinary maps and free cumulants to fully simple maps yields the identification
		\begin{align*}
			\frac{W^\vee_{g,n}(\frac{1}{X_1},...,\frac{1}{X_n})}{X_1\cdot ...\cdot X_n}=&M_{g,n}(X_1,...,X_n)+\frac{\delta_{2,n}\delta_{g,0}X_1X_2}{(X_1-X_2)^2}\\
			W_{g,n}(Y_1,...,Y_n)=&\frac{C_{g,n}(Y_1,...,Y_n)}{Y_1\cdot ...\cdot Y_m}+\frac{\delta_{2,n}\delta_{g,0}}{(Y_1-Y_2)^2}.
		\end{align*}
		 Then, the assertion is equivalent to Proposition \ref{Prop:newformula}.
	\end{proof}
\end{corollary}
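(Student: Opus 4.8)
The plan is to obtain \eqref{corEqComb} by feeding the combinatorial dictionary between topological recursion and higher order free probability into Proposition \ref{Prop:newformula}. First I would recall that, by \cite{Bychkov:2021hfh,Borot:2021eif}, for the formal Hermitian one-matrix model the correlators $\hat W_{g,n}$ of \eqref{ordinary} and $\hat W^\vee_{g,n}$ of \eqref{fullysimple} are interchanged by the $x$-$y$ symplectic transformation, that the associated spectral curve is meromorphic with simple distinct ramification points so that Theorem \ref{Thm:main} and hence Proposition \ref{Prop:newformula} apply, and that (after the standard resolvent change of variables) the generating series of higher order moments \eqref{freeM} and of higher order free cumulants \eqref{freeC} are exactly these correlators. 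Taking the spectral curve whose $W_{g,n}$ produce the cumulants and whose $W^\vee_{g,n}$ produce the moments, this yields the two identifications
\begin{align*}
	\frac{W^\vee_{g,n}(\frac{1}{X_1},\dots,\frac{1}{X_n})}{X_1\cdots X_n}&=M_{g,n}(X_1,\dots,X_n)+\frac{\delta_{2,n}\delta_{g,0}\,X_1X_2}{(X_1-X_2)^2},\\
	W_{g,n}(Y_1,\dots,Y_n)&=\frac{C_{g,n}(Y_1,\dots,Y_n)}{Y_1\cdots Y_n}+\frac{\delta_{2,n}\delta_{g,0}}{(Y_1-Y_2)^2},
\end{align*}
together with the matching of variables $x_i=Y_i$ and $y_i=1/X_i$; the $(g,n)=(0,1)$ specialisation of the first line combined with $W^\vee_{0,1}(y)=x(y)$ reproduces Voiculescu's relation \eqref{voi} in the form $x(1/X)=X\,M_1(X)$, i.e. $Y_i=X_iM_1(X_i)$.

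Second, I would carry this change of variables through formula \eqref{ThmEqComb} term by term. For the $\bigcirc$-vertex prefactors, the chain rule $\partial_{y_i}=-X_i^2\,\partial_{X_i}$ turns $-\partial_{y_i}$ into $X_i^2\,\partial_{X_i}$ and turns $-x_i'(y_i)=-\,dY_i/dy_i$ into $X_i^2\,dY_i/dX_i$, which is precisely the operator $\big(X_i^2\,\partial_{X_i}\big)^{r_i(\Gamma)+2H_i(\Gamma)-1}\big(X_i^2\,dY_i/dX_i\big)$ in \eqref{corEqComb}. For the $\bullet$-vertex weights, substituting the second identity above into $\varpi^{g_i}_D$ of Definition \ref{def:weightnew} reproduces $\varpi^{g_i,\mathrm{free}}_D$ verbatim as soon as $(g_i,k)\neq(0,2)$ — in the sub-case $(g_i,k)=(0,1)$ one uses once more $x(1/X)=XM_1(X)$ to identify $W_{0,1}(Y)$ with $C_{0,1}(Y)/Y$. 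In the case $(g_i,k)=(0,2)$ the $\delta$-correction of the second identity has to be reconciled with the regularisations built into Definition \ref{def:weightnew} and into the primed product: for an off-diagonal $\bullet$-vertex the summand $1/(Y_i-Y_j)^2$ cancels the $\delta$-term, leaving $\big(C_{0,2}(Y_i,Y_j)+Y_iY_j/(Y_i-Y_j)^2\big)/(Y_iY_j)$, while on the diagonal the regularised $W_{0,2}(x_i,x)-1/(x_i-x)^2$ appearing in $\varpi^0_{\{(i,h_1),(i,h_2)\}}$ already equals $C_{0,2}(Y_i,Y)/(Y_iY)$ in the new variables.

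Finally, on the left-hand side, since $2g-2+n>0$ there is no $\delta$-term in the first identity, so $W^\vee_{g,n}(1/X_1,\dots,1/X_n)=M_{g,n}(X_1,\dots,X_n)\,X_1\cdots X_n$, which is the left-hand side of \eqref{corEqComb}. Inserting all of these substitutions into \eqref{ThmEqComb} produces \eqref{corEqComb}, so the corollary reduces entirely to Proposition \ref{Prop:newformula}. I expect the one genuinely delicate point — the one worth writing out in full rather than leaving to the reader — to be the $(0,2)$ bookkeeping of the previous paragraph: checking that the three regularisation conventions, namely those of the weights $\varpi$, of the primed product, and of the $W_{0,2}$–$C_{0,2}$ dictionary dictated by the $x$-$y$ transformation, are mutually consistent; everything else is a mechanical change of variables.
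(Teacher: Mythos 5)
Your proposal is correct and follows essentially the same route as the paper: identify $W^\vee_{g,n}$ with the moment generating series and $W_{g,n}$ with the cumulant generating series via the fully-simple/ordinary duality, then transport Proposition \ref{Prop:newformula} through the change of variables $x_i=Y_i$, $y_i=1/X_i$. The paper's own proof is far terser (it states the two identifications and declares equivalence to Proposition \ref{Prop:newformula}), so your explicit chain-rule computation $-\partial_{y_i}=X_i^2\partial_{X_i}$, $-x_i'(y_i)=X_i^2\,dY_i/dX_i$ and the $(0,2)$ regularisation bookkeeping are exactly the details the paper leaves to the reader, and they check out.
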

The moment-cumulant relation in free probability in Corollary \ref{cor:free} becomes much more canonical by considering the shifted generating series
\begin{align*}
	\tilde{M}_{g,n}(X_1,...,X_n):=&\delta_{n,1}\delta_{g,0}X_1+\sum_{k_1,...,k_n\geq 1}\varphi_{g,n}[a^{k_1},...,a^{k_n}]\prod_{i=1}^nX_i^{k_i+1}\\
	\tilde{C}_{g,n}(Y_1,...,Y_n):=&\frac{\delta_{n,1}\delta_{g,0}}{Y_1}+\sum_{k_1,...,k_n\geq 1}\kappa_{g;k_1,...,k_n}[a,...,a]\prod_{i=1}^nY_i^{k_i-1},
\end{align*}
which turns \eqref{voi} into
\begin{align*}
	\tilde{C}_1(\tilde{M}_1(X))=\frac{1}{X}.
\end{align*}

\appendix
\section{Proof of Lemma \ref{lem:tech}}\label{AppA}
Shift the arbitrary function $f\bigg(\frac{x(y)}{\cosh(u/2)}\bigg)\to f\big(x(y)\big)$ (the proof would also work for $f\bigg(\frac{x(y)}{\cosh(u/2)}\bigg)$ but it is just more convenient this way), then the Lemma reads: 
\begin{lemma}
	Let $\Phi_{0,1}(x)=\int_o^x dx'\,y(x')$, $f(x)$ some smooth function and $S(u)=\frac{e^{u/2}-e^{-u/2}}{u}$. Then, the following simplification holds as a formal power series in $u^{2}$
	\begin{align}\nonumber
	&\cosh(u/2)S(u)\sum_{j\geq0}(-\partial_y)^j\bigg[x'(S(u)y)
	f\big(\cosh(u/2) x(S(u)y)\big)\\\label{LemmEq}
	&\qquad\qquad\qquad\qquad\qquad \times \frac{\bigg(\frac{\Phi_{0,1}(e^{u/2} x(S(u) y))-\Phi_{0,1}(e^{-u/2} x(S(u) y))}{u S(u)x(S(u)y)}-y\bigg)^j}{j!}\bigg]\\\nonumber
	=&\sum_{j\geq0}(-\partial_y)^j\bigg[x'(y)
	f\big(x(y)\big)\frac{\bigg(\frac{\Phi_{0,1}(\frac{e^{u/2} x( y)}{\cosh(u/2)})-\Phi_{0,1}(\frac{e^{-u/2} x( y)}{\cosh(u/2)})}{u S(u)\frac{x(y)}{\cosh(u/2)}}-y\bigg)^j}{j!}\bigg].
	\end{align} 
	This is a simplification in the sense that the rhs does not contain any explicit powers of $y$, whereas the lhs does through the expansion of $x(S(u)y)$.
\end{lemma}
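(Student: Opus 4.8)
The plan is to collapse the sum over $j$ on each side by the Lagrange--B\"urmann reversion identity --- the function-level form of \eqref{LBEq}, namely
\[
\sum_{k\ge0}\frac{t^k}{k!}\,\partial_a^k\big[F'(a)\,\phi(a)^k\big]=\frac{F'(\zeta)}{1-t\,\phi'(\zeta)},\qquad \zeta=a+t\,\phi(\zeta),
\]
obtained by differentiating the classical Lagrange reversion series in $a$ --- and then to identify the two resulting closed expressions by an explicit change of variables. First I would abbreviate
\[
A(w):=x'(w)\,f\!\big(\cosh(u/2)\,x(w)\big),\qquad B(w):=\frac{\Phi_{0,1}(e^{u/2}x(w))-\Phi_{0,1}(e^{-u/2}x(w))}{u\,S(u)\,x(w)},
\]
and observe that, after writing the explicit $-y$ in the $j$-th summand on the left-hand side as $-w/S(u)$ with $w=S(u)y$, that summand equals $A(w)\,C(w)^j/j!$ with $C(w):=B(w)-w/S(u)$. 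Since $\Phi_{0,1}'$ is the functional inverse of $x(\cdot)$ one has $B(w)|_{u=0}=w$, and $B$, $w/S(u)$ are both even in $u$, so $C=\mathcal{O}(u^2)$; this is exactly what makes the $j$-sum a well-defined formal power series in $u^2$, only finitely many $j$ contributing to each power of $u$.

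Using $\partial_y^j[h(S(u)y)]=S(u)^j h^{(j)}(S(u)y)$ the left-hand side becomes $\cosh(u/2)S(u)\sum_j\frac{(-S(u))^j}{j!}\partial_w^j[A(w)C(w)^j]\big|_{w=S(u)y}$, so the reversion identity with $t=-S(u)$ and kernel $\phi=C$ gives $\cosh(u/2)S(u)\,A(W)/(1+S(u)C'(W))$, where $W=W(y)=y+\mathcal{O}(u^2)$ is the formal series defined by $W=S(u)y-S(u)C(W)$, i.e.\ simply $y=B(W)$; since $1+S(u)C'(W)=S(u)B'(W)$ the left-hand side collapses to $\cosh(u/2)\,A(W)/B'(W)$. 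Doing the same on the right-hand side directly in $y$, with $\tilde A(y):=x'(y)f(x(y))$, $\hat B(y):=\big(\Phi_{0,1}(\tfrac{e^{u/2}x(y)}{\cosh(u/2)})-\Phi_{0,1}(\tfrac{e^{-u/2}x(y)}{\cosh(u/2)})\big)\big/\big(uS(u)\tfrac{x(y)}{\cosh(u/2)}\big)$, reversion parameter $t=-1$ and kernel $\hat B(y)-y=\mathcal{O}(u^2)$, the right-hand side collapses to $\tilde A(Y)/\hat B'(Y)$ with $Y=Y(y)$ defined by $y=\hat B(Y)$.

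It then remains to match $\cosh(u/2)A(W)/B'(W)$ with $\tilde A(Y)/\hat B'(Y)$. Writing $\tilde B(\xi):=\big(\Phi_{0,1}(\tfrac{e^{u/2}\xi}{\cosh(u/2)})-\Phi_{0,1}(\tfrac{e^{-u/2}\xi}{\cosh(u/2)})\big)\big/\big(uS(u)\xi/\cosh(u/2)\big)$ one has $B(w)=\tilde B(\cosh(u/2)x(w))$ and $\hat B(y)=\tilde B(x(y))$, so the relations $y=B(W)$ and $y=\hat B(Y)$ are compatible precisely when $x(Y)=\cosh(u/2)\,x(W)$; take $W,Y$ linked by this equation. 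Differentiating it gives $dY/dW=\cosh(u/2)\,x'(W)/x'(Y)$, while $B'(W)=dy/dW$ and $\hat B'(Y)=dy/dY$ give $B'(W)/\hat B'(Y)=dY/dW$; combined with $A(W)=x'(W)f(x(Y))=\frac{x'(W)}{x'(Y)}\tilde A(Y)$ this yields $\cosh(u/2)A(W)/B'(W)=\tilde A(Y)/\hat B'(Y)$, which is the assertion. The main work --- and the only real obstacle --- is bookkeeping: tracking the nested $u$-dependence ($\cosh(u/2)$, $S(u)$, $e^{\pm u/2}$) so that the two reversion kernels come out as genuine $\mathcal{O}(u^2)$ series, and verifying that the Jacobian factors in the last step cancel to leave exactly the prefactor $\cosh(u/2)$; one should also note explicitly that, because those kernels vanish at $u=0$, all the resummations above are legitimate coefficientwise in $u^2$.
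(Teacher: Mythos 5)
Your proposal is correct, and it takes a genuinely different route from the paper. The paper proves the lemma by induction on the order $u^{2n}$: it differentiates both sides with respect to $u$, reorganises the resulting expressions back into the same structural form with a modified test function $\tilde f$ (which acquires its own $u$-expansion, whence the reduction of the statement to each coefficient $f_k$), and invokes the induction hypothesis after checking that several leftover terms cancel; this occupies the whole of Appendix A. You instead collapse each $j$-sum in closed form via the derivative form of the Lagrange reversion theorem, $\sum_{k}\frac{t^k}{k!}\partial_a^k[F'(a)\phi(a)^k]=F'(\zeta)/(1-t\phi'(\zeta))$ with $\zeta=a+t\phi(\zeta)$, obtaining $\cosh(u/2)A(W)/B'(W)$ on the left and $\tilde A(Y)/\hat B'(Y)$ on the right, and then match the two by the substitution $x(Y)=\cosh(u/2)\,x(W)$, under which the Jacobian factors cancel exactly against the prefactor $\cosh(u/2)$. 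The key verifications all check out: $B(w)\vert_{u=0}=w$ and $\hat B(y)\vert_{u=0}=y$ because $\Phi_{0,1}'=y(\cdot)$ inverts $x$, evenness in $u$ makes both kernels $\mathcal{O}(u^2)$ so the reversion is legitimate coefficientwise in $u^2$ despite $t=-S(u)$ not being small, and the identity $B(w)=\tilde B(\cosh(u/2)x(w))$, $\hat B(y)=\tilde B(x(y))$ makes the two implicit equations $y=B(W)$ and $y=\hat B(Y)$ compatible precisely under your change of variables. Your argument is shorter and more conceptual --- it exhibits what both sides actually equal, rather than only that their $u$-derivatives agree --- and it is stylistically consonant with the Lagrange--B\"urmann manipulations already used in Section \ref{Sec.x}; the paper's induction avoids resumming the full $j$-series but at the cost of a substantially longer computation.
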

\begin{proof}
	The Lemma will be proven by induction, where we take on both sides the derivative wrt $u$ and represent it as derivatives wrt $y$. This will then be equivalent to the induction hypothesis.
	
	Note that both expressions in the brackets with power $j$ are at least of order $u^2$, which restricts both sums over $j$ up to $n$, if take this equation at order $u^{2n}$. Furthermore, since $f$ is an arbitrary function, it can have a $u$-expansion by itself
	\begin{align*}
		f(x(y))=\sum_{k\geq 0}f_k(x(y))u^k.
	\end{align*}
	This implies by linearity and by the arbitrariness of $f(x(y))$ that the Lemma at order $u^{2n}$ is equivalent to 
	\begin{align}\nonumber
	&\sum_{j=0}^{n-k}(-\partial_y)^j\bigg[x'(S(u)y)\cosh(u/2)S(u)
	f_k\big(\cosh(u/2) x(S(u)y)\big)\\\label{genLemmEq}
	&\qquad \times\frac{\bigg(\frac{\Phi_{0,1}(e^{u/2} x(S(u) y))-\Phi_{0,1}(e^{-u/2} x(S(u) y))}{u S(u)x(S(u)y)}-y\bigg)^j}{j!}\bigg]\\\nonumber
	=&\sum_{j=0}^{n-k}(-\partial_y)^j\bigg[x'(y)
	f_k\big(x(y)\big)\frac{\bigg(\frac{\Phi_{0,1}(\frac{e^{u/2} x( y)}{\cosh(u/2)})-\Phi_{0,1}(\frac{e^{-u/2} x( y)}{\cosh(u/2)})}{u S(u)\frac{x(y)}{\cosh(u/2)}}-y\bigg)^j}{j!}\bigg]
	\end{align} 
	taken at order $u^{2n-k}$ for all $k$ between $0$ and $2n$.
	
	We proceed by induction in $n$, where we are looking at the order $u^{2n}$. The initial case with $n=0$ holds obviously.
	
	Assume now without loss of generality that $f(x(y))$ is independent of $u$. Then, we take the derivative wrt $u$ of the rhs of \eqref{LemmEq} taken at order $u^{2n}$
	\begin{align*}
		&-\partial_y\sum_{j=1}^n(-\partial_y)^{j-1}\bigg[x'(y)
		f\big(x(y)\big)\frac{\bigg(\frac{\Phi_{0,1}(\frac{e^{u/2} x( y)}{\cosh(u/2)})-\Phi_{0,1}(\frac{e^{-u/2} x( y)}{\cosh(u/2)})}{u S(u)\frac{x(y)}{\cosh(u/2)}}-y\bigg)^{j-1}}{(j-1)!}\\
		&\times \bigg(\partial_t\frac{\Phi_{0,1}(\frac{e^{t/2} x( y)}{\cosh(u/2)})-\Phi_{0,1}(\frac{e^{-t/2} x( y)}{\cosh(u/2)})}{t S(t)\frac{x(y)}{\cosh(u/2)}}\bigg\vert_{t=u}+\partial_t\frac{\Phi_{0,1}(\frac{e^{u/2} x( y)}{\cosh(t/2)})-\Phi_{0,1}(\frac{e^{-u/2} x( y)}{\cosh(t/2)})}{u S(u)\frac{x(y)}{\cosh(t/2)}}\bigg\vert_{t=u}\bigg)\bigg]\\
		=&-\partial_y\sum_{j=0}^{n-1}(-\partial_y)^{j}\bigg[x'(y)
		\tilde{f}\big(x(y)\big)\frac{\bigg(\frac{\Phi_{0,1}(\frac{e^{u/2} x( y)}{\cosh(u/2)})-\Phi_{0,1}(\frac{e^{-u/2} x( y)}{\cosh(u/2)})}{u S(u)\frac{x(y)}{\cosh(u/2)}}-y\bigg)^{j}}{j!}\bigg],
	\end{align*}
	where we have defined
	\begin{align*}
		\tilde{f}\big(x(y)\big):=&f\big(x(y)\big)\bigg(\partial_t\frac{\Phi_{0,1}(\frac{e^{t/2} x( y)}{\cosh(u/2)})-\Phi_{0,1}(\frac{e^{-t/2} x( y)}{\cosh(u/2)})}{t S(t)\frac{x(y)}{\cosh(u/2)}}\bigg\vert_{t=u}+\partial_t\frac{\Phi_{0,1}(\frac{e^{u/2} x( y)}{\cosh(t/2)})-\Phi_{0,1}(\frac{e^{-u/2} x( y)}{\cosh(t/2)})}{u S(u)\frac{x(y)}{\cosh(t/2)}}\bigg\vert_{t=u}\bigg)\\
		=&\sum_{k=1}^\infty \tilde{f}_k(x(y))u^k,
	\end{align*}
	which has a $u$ expansion. Important here, the function $\tilde{f}$ has no constant term, it starts with $u^1$. Therefore, it has a redundant factor of $u$ which can be removed $\tilde{f}=\frac{\hat{f}}{u}$.
	 Using the expansion of $\hat{f}$ in $u$, we have by induction hypothesis that for each $\hat{f}_k$ \eqref{genLemmEq} holds (replace $f_k$ by $\hat{f}_k$  and $n$ by $n-1$ in \eqref{genLemmEq}).
	
	Next, we compute the $u$-derivative of the lhs of \eqref{LemmEq} at order $u^{2n}$
	\begin{align}\nonumber
		&\sum_{j=1}^n(-\partial_y)^{j}\bigg[x'(S(u)y)\cosh(u/2)S(u)
		f\big(\cosh(u/2) x(S(u)y)\big)\\\nonumber
		&\qquad\times \frac{\bigg(\frac{\Phi_{0,1}(e^{u/2} x(S(u) y))-\Phi_{0,1}(e^{-u/2} x(S(u) y))}{u S(u)x(S(u)y)}-y\bigg)^{j-1}}{(j-1)!}\\\nonumber
		&\qquad\qquad\times \bigg(\partial_t\frac{\Phi_{0,1}(e^{t/2} x(S(u) y))-\Phi_{0,1}(e^{-t/2} x(S(u) y))}{t S(t)x(S(u)y)}\bigg\vert_{t=u}\\\nonumber
		&\qquad\qquad\qquad +\partial_t\frac{\Phi_{0,1}(e^{u/2} x(S(t) y))-\Phi_{0,1}(e^{-u/2} x(S(t) y))}{u S(u)x(S(t)y)}\bigg\vert_{t=u}\bigg)\bigg]\\\nonumber
		&+\sum_{j=0}^n(-\partial_y)^{j}\frac{\bigg(\frac{\Phi_{0,1}(e^{u/2} x(S(u) y))-\Phi_{0,1}(e^{-u/2} x(S(u) y))}{u S(u)x(S(u)y)}-y\bigg)^{j}}{j!}\\\nonumber
		&\qquad \times \partial_t\cosh(t/2)S(t)x'(S(t)y)
		f\big(\cosh(t/2) x(S(t)y)\big)\big\vert_{t=u}\\\label{eq}
		=&-\partial_y\sum_{j=0}^{n-1}(-\partial_y)^{j}\bigg[x'(S(u)y)\cosh(u/2)S(u)
		\tilde{f}\big(\cosh(u/2)x(S(u)y)\big)\\\nonumber
		&\qquad\qquad  \times\frac{\bigg(\frac{\Phi_{0,1}(e^{u/2} x(S(u) y))-\Phi_{0,1}(e^{-u/2} x(S(u) y))}{u S(u)x(S(u)y)}-y\bigg)^{j}}{j!}\bigg]\\\nonumber
		&+\sum_{j=1}^n(-\partial_y)^{j}\bigg[x'(S(u)y)\cosh(u/2)S(u)
		f\big(\cosh(u/2) x(S(u)y)\big)\\\nonumber
		&\qquad\times \frac{\bigg(\frac{\Phi_{0,1}(e^{u/2} x(S(u) y))-\Phi_{0,1}(e^{-u/2} x(S(u) y))}{u S(u)x(S(u)y)}-y\bigg)^{j-1}}{(j-1)!}\\\label{eq1}
		&\qquad\qquad\times \bigg(\partial_t\frac{\Phi_{0,1}(e^{u/2} x(S(t) y))-\Phi_{0,1}(e^{-u/2} x(S(t) y))}{u S(u)x(S(t)y)}\bigg\vert_{t=u}\\\label{eq2}
		&\qquad\qquad\qquad -\partial_t\frac{\Phi_{0,1}(\frac{e^{u/2}\cosh(u/2) x(S(u) y)}{\cosh(t/2)})-\Phi_{0,1}(\frac{e^{-u/2} \cosh(u/2)x(S(u) y)}{\cosh(t/2)})}{u S(u)\frac{\cosh(u/2)x(S(u)y)}{\cosh(t/2)}}\bigg\vert_{t=u}\bigg)\bigg]\\\nonumber
		&+\sum_{j=0}^n(-\partial_y)^{j}\frac{\bigg(\frac{\Phi_{0,1}(e^{u/2} x(S(u) y))-\Phi_{0,1}(e^{-u/2} x(S(u) y))}{u S(u)x(S(u)y)}-y\bigg)^{j}}{j!}\\
		&\qquad \times \partial_t\cosh(t/2)S(t)x'(S(t)y)
		f\big(\cosh(t/2) x(S(t)y)\big)\big\vert_{t=u}\label{eq3},
	\end{align} 
	where the line \eqref{eq1} is added to reconstruct $\tilde{f}(\cosh(u/2) x(S(u)y))$ in line \eqref{eq}. Thus, the induction hypothesis \eqref{genLemmEq} is satisfied for $n-1$ if the last six lines cancel, which will be checked now. 
	
	We substitute in line \eqref{eq1} the derivative wrt $t$ by
	\begin{align*}
		\text{in \eqref{eq1}:}\qquad \partial_t =\frac{S(u)'}{S(u)}y\partial_y.
	\end{align*}
	In line \eqref{eq2}, we substitute the $t$-derivate by
	\begin{align*}
		\text{in \eqref{eq2}:}\qquad \partial_t =-\frac{x(S(u)y)}{S(u)x'(S(u)y)}\frac{\cosh(u/2)'}{\cosh(u/2)}\partial_y.
	\end{align*}
	We also rewrite line \eqref{eq3} as
	\begin{align*}
		&\partial_t\cosh(t/2)S(t)x'(S(t)y)
		f\big(\cosh(t/2) x(S(t)y)\big)\big\vert_{t=u}\\
		=&S(u)'x'(S(u)y)\cosh(u/2)f\big(\cosh(u/2) x(S(u)y)\big)\\
		&+\cosh(u/2)'\partial_y \big[x(S(u)y)f\big(\cosh(u/2) x(S(u)y)\big)\big]\\
		&+\frac{S(u)'}{S(u)}y\partial_y[\cosh(u/2)S(u)x'(S(u)y)f\big(\cosh(u/2) x(S(u)y)\big)]
	\end{align*}
	where $S(u)'=\partial_uS(u)$ and $\cosh(u/2)'=\partial_u\cosh(u/2)$.
	
	Inserting these new representations of derivatives wrt $y$, we find for the last six lines  (\ref{eq1}-\ref{eq3})
	\begin{align}\label{A1}
		&\sum_{j=0}^n(-\partial_y)^{j}S(u)'\cosh(u/2)\bigg(\tilde{y}\partial_{\tilde{y}}\bigg[x'(S(u)\tilde{y})
		f\big(\cosh(u/2) x(S(u)\tilde{y})\big)\\\nonumber
		&\qquad\qquad\qquad\times \frac{\bigg(\frac{\Phi_{0,1}(e^{u/2} x(S(u) \tilde{y}))-\Phi_{0,1}(e^{-u/2} x(S(u) \tilde{y}))}{u S(u)x(S(u)\tilde{y})}-y\bigg)^{j}}{j!}\bigg]\bigg\vert_{\tilde{y}=y}\bigg)\\\nonumber
		&+\sum_{j=0}^n(-\partial_y)^{j}\cosh(u/2)'\bigg(\partial_{\tilde{y}}\bigg[x(S(u)\tilde{y})
		f\big(\cosh(u/2) x(S(u)\tilde{y})\big)\\\nonumber
		&\qquad\qquad\qquad\times \frac{\bigg(\frac{\Phi_{0,1}(e^{u/2} x(S(u) \tilde{y}))-\Phi_{0,1}(e^{-u/2} x(S(u) \tilde{y}))}{u S(u)x(S(u)\tilde{y})}-y\bigg)^{j}}{j!}\bigg]\bigg\vert_{\tilde{y}=y}\bigg)\\\label{eq4}
		&+\sum_{j=0}^n(-\partial_y)^{j}S(u)'\cosh(u/2)\bigg[x'(S(u)y)
		f\big(\cosh(u/2) x(S(u)y)\big)\\\nonumber
		&\qquad\qquad\qquad\times \frac{\bigg(\frac{\Phi_{0,1}(e^{u/2} x(S(u) y))-\Phi_{0,1}(e^{-u/2} x(S(u) y))}{u S(u)x(S(u)y)}-y\bigg)^{j}}{j!}\bigg].
	\end{align}
	Replacing the derivative $\tilde{y}\partial_{\tilde{y}}$ by $\partial_{\tilde{y}} \tilde{y}-1$ in line \eqref{A1} cancels line \eqref{eq4}. The rest can be summarised as
	\begin{align*}
		=&\sum_{j=0}^n(-\partial_y)^{j}\bigg(\partial_{\tilde{y}}\bigg[\partial_u\{x(S(u)\tilde{y})\cosh(u/2)\}
		f\big(\cosh(u/2) x(S(u)\tilde{y})\big)\\\nonumber
		&\qquad\qquad\qquad\times \frac{\bigg(\frac{\Phi_{0,1}(e^{u/2} x(S(u) \tilde{y}))-\Phi_{0,1}(e^{-u/2} x(S(u) \tilde{y}))}{u S(u)x(S(u)\tilde{y})}-y\bigg)^{j}}{j!}\bigg]\bigg\vert_{\tilde{y}=y}\bigg).
	\end{align*}
	Note that the derivative $\partial_{\tilde{y}}$ does not act on $-y$ inside the brackets. Adding this possible action and subtracting it yields
	\begin{align*}
		=&-\sum_{j=0}^n(-\partial_y)^{j+1}\bigg[\partial_u\{x(S(u)y)\cosh(u/2)\}
		f\big(\cosh(u/2) x(S(u)y)\big)\\\nonumber
		&\qquad\qquad\qquad\times \frac{\bigg(\frac{\Phi_{0,1}(e^{u/2} x(S(u) y))-\Phi_{0,1}(e^{-u/2} x(S(u) y))}{u S(u)x(S(u)y)}-y\bigg)^{j}}{j!}\bigg]\\
		&+\sum_{j=1}^n(-\partial_y)^{j}\bigg[\partial_u\{x(S(u)y)\cosh(u/2)\}
		f\big(\cosh(u/2) x(S(u)y)\big)\\\nonumber
		&\qquad\qquad\qquad\times \frac{\bigg(\frac{\Phi_{0,1}(e^{u/2} x(S(u) y))-\Phi_{0,1}(e^{-u/2} x(S(u) y))}{u S(u)x(S(u)y)}-y\bigg)^{j-1}}{(j-1)!}\bigg]\\
		=&0.
	\end{align*}
	The expression vanishes after shifting the second sum $j\to j+1$ for all $j=1,...,n$. One term remains, which is $j=n$ of the first sum. Since we are looking at this expression at order $u^{2n}$ (which is after $u$-derivative $u^{2n-1}$) and we have
	\begin{align*}
		\bigg(\frac{\Phi_{0,1}(e^{u/2} x(S(u) y))-\Phi_{0,1}(e^{-u/2} x(S(u) y))}{u S(u)x(S(u)y)}-y\bigg)^{n}\in \mathcal{O}(u^{2n}),
	\end{align*}
	the remaining terms in $u$ are expanded to the constant term. We find for those
	\begin{align*}
		&\partial_u\{x(S(u)y)\cosh(u/2)\}
		f\big(\cosh(u/2) x(S(u)y)\big)
		\to \partial_u\{x(y)\}
		f\big( x(y)\big)=0,
	\end{align*}
	due to the $u$-derivative.
	
	We can summarise all step performed in the proof as follows: Writing the lhs and rhs of \eqref{genLemmEq} as $f^n_{lhs}(u)$ and  $f^n_{rhs}(u)$, the aim is to prove that $[u^{2n}]f^n_{lhs}(u)=[u^{2n}]f^n_{rhs}(u)$. Differentiating both
	$f^n_{lhs}(u)$ and  $f^n_{rhs}(u)$ with respect to $u$, and since $[u^{2n}]f^n_{lhs}(u)=\frac{1}{2n}[u^{2n-1}]\partial_u f^n_{lhs}(u)$, it amounts to showing that $[u^{2n-1}]\partial_u f^n_{lhs}(u)=[u^{2n-1}]\partial_u f^n_{rhs}(u)$. Then, the following partial differential equation was shown $\partial_uf^n_{lhs}(u)=-\partial_y \tilde{f}^{n-1}_{lhs} (u)$, and similarly for the rhs, for another function $\tilde{f}$. Since the function $\tilde{f}$ has no constant term, it
	has a redundant factor of $u$ which can be removed $\tilde{f}=\hat{f}/u$, so that the induction hypothesis can indeed be used at order $n-1$: $[u^{2n-2}]\hat{f}^{n-1}_{lhs}(u)=[u^{2n-2}]\hat{f}^{n-1}_{rhs}(u)$. This proves by induction in powers of $[u^{2n}]$ that the lemma holds.


\end{proof}

\newcommand{\etalchar}[1]{$^{#1}$}

\end{document}